\newtheorem{definition}{Definition}
\newtheorem{lemma}{Lemma}
\newtheorem{theorem}{Theorem}
\newtheorem{corollary}{Corollary}
\newcommand{\blue}[1]{{\color{blue} {#1}}}
\newcommand{\red}[1]{{\color{red} {#1}}}
\long\def\ignore#1{}
\renewcommand{\H}{\mathcal{H}}
\newcommand{\Leq}{\;\dot{=}\;}
\newcommand{\types}{\mathcal{T}}
\newcommand{\Var}{\mathcal{V}}
\newcommand{\syntrue}{\top}
\newcommand{\synfalse}{\bot}
\newcommand{\SigmaOne}{\Sigma^{\dot{=}}}
\newcommand{\SigmaEq}{\Sigma^{=}}
\title{On Reductions of Hintikka Sets for Higher-Order Logic}
\author{Alexander Steen$^1$, Christoph Benzmüller$^2$\\
        \small
        $^1$ University of Luxembourg, FSTM, \href{mailto:alexander.steen@uni.lu}{\texttt{alexander.steen@uni.lu}}\\
        \small
        $^2$ Freie Universität Berlin, FB Mathematik und Informatik,\href{mailto:c.benzmueller@fu-berlin.de}{\texttt{c.benzmueller@fu-berlin.de}}
       }
\date{\today}
\begin{document}
\maketitle

\begin{abstract}
  Steen's (2018) Hintikka set properties for Church's type theory
  based on primitive equality are reduced to the Hintikka set
  properties of Brown (2007). Using this reduction, a model existence theorem
  for Steen's properties is derived.
\end{abstract}

%%%%%%%%%%%%%%%%%%%%%%%%%%%%%%%%%%%%%%%%%%%%%%%%%%%%%%%%%%%%%%%%%%%%%%%%%%%%%%%%%%%%%%%%
%%%%%%%%%%%%%%%%%%%%%%%%%%%%%%%%%%%%%%%%%%%%%%%%%%%%%%%%%%%%%%%%%%%%%%%%%%%%%%%%%%%%%%%%
%%%%%%%%%%%%%%%%%%%%%%%%%%%%%%%%%%%%%%%%%%%%%%%%%%%%%%%%%%%%%%%%%%%%%%%%%%%%%%%%%%%%%%%%
\section{Introduction and Preliminaries}
Abstract consistency properties and Hintikka sets play an important
role in the study (e.g.,~of Henkin-complete\-ness) of proof calculi
for Church's type theory~\cite{church40,J43}, aka. classical
higher-order logic (HOL). Technically quite different definitions of
these terms have been used in the literature, since they depend on the
primitive logical connectives assumed in each case. The definitions of
Benzmüller, Brown and Kohlhase \cite{J6,R37}, for example, are based
on negation, disjunction and universal quantification, while Steen
\cite{DBLP:phd/dnb/Steen18}, in the tradition of
Andrews~\cite{andrews2002introduction}, works with primitive equality
only. Despite their conceptual relationship, important semantical
corollaries that are implied by these syntax related Hintikka properties (such
as model existence theorems) can hence not be directly transferred
between formalisms. 

Brown~\cite{brown07,brown04} presents generalised abstract consistency
properties in which the primitive logical connectives can vary. In
this paper we show that the properties of Steen can be reduced to
those of Brown.\footnote{In an earlier reduction attempt we tried to
  reduce Steen's \cite{DBLP:phd/dnb/Steen18} abstract consistency
  properties to those of Benzmüller, Brown and Kohlhase
  \cite{J6}. This attempt had gaps that could not easily be closed (as was pointed out by an unknown
  reviewer). The mapping that we applied in this initial reduction attempt between the
  two formalisms replaced primitive equations (in Steen's Hintikka
  sets) by Leibniz equations (to obtain Hintikka sets in the style of
  Benzmüller, Brown and Kohlhase). It thereby introduced additional
  universally quantified formulas which in turn triggered the
  applicability of abstract consistency conditions whose validity
  could not be ensured by referring to those of Steen. In this work we
  therefore  instead map to the
  structurally better suited, generalised conditions
  of Brown~\cite{brown07,brown04}, which enables us to circumvent the
  problems induced by our previous reduction attempt to the work of Benzmüller, Brown and Kohlhase.\label{footnote:1}}
Theorem~\ref{thm:reduction} that we establish in this paper paves way
for the convenient reuse of (e.g.,~model existence) results from the
work of Brown~\cite{brown07,brown04}  in the context of Steen's setting.
%by showing that Steen's Hintikka
%set properties can be reduced to those as studied by Benzmüller et al. 
%In more general terms we illustrate how technical dependencies on particular primitive logical connectives can
%be overcome by providing respective reductions between technically different definitions of Hintikka sets.

\paragraph{Paper structure.} In the remainder of this introduction we recapitulate
some relevant (syntactic) notions on HOL, mainly to clarify our
notation; for further details on HOL as relevant for this
paper.
%publications of Steen and Benzmüller et
%al.~\cite{J6,R37,DBLP:phd/dnb/Steen18}. 
\S\ref{sec:chad} we present the Hintikka properties as used by
Brown, and in \S\ref{sec:steen} we give the related
properties as used by Steen. In \S\ref{sec:steen} we then show various
lemmas that are implied in Steen's setting, and it are (some of) those lemmas which prepare the main
reduction result of this paper (Theorem \ref{thm:reduction}), which is given
in \S\ref{sec:reduction}. A model existence theorem for Hintikka sets as defined by
Steen is then derived in \S\ref{sec:bridge}.

\paragraph{Equality conventions.}
Different notions of equality will be used in the following: If a concept is defined
(as an abbreviation), the symbol $:=$ is used.
Primitive equality, written $=^\tau$, refers to a
logical constant symbol
from the HOL language such that $s_\tau =^\tau t_\tau$ is a term of the logic
(assuming $s_\tau$ and $t_\tau$ are, where $\tau$ is a type
annotation), cf.\ details further below.
 Leibniz-equality, written $\dot{=}$, is a defined term; usually it stands for $\lambda X_\tau.\, \lambda
Y_\tau.\, \forall P_{o\tau}.\, (P\;X) \Rightarrow (P\;Y)$, where
$\Rightarrow$ is a (primitive) logical connective.
Meta equality $\equiv$ denotes set-theoretic identity between objects.
Finally, $\equiv_\star$, for $\star \subseteq \{\beta,\eta\}$ is used for syntactic
equality modulo $\beta-$, $\eta-$ and $\beta\eta$-conversion,
respectively (as in the related work, $\alpha$-conversion
is taken as implicit).

\paragraph{Syntax of HOL.}
The set $\types$ of simple types is freely generated from the base types $o$ and $\iota$
by juxtaposition. The types $o$ and $\iota$ represent
the type of Booleans and individuals, respectively. A type $\nu\tau$ represents the type of a
total function from objects of type $\tau$ to objects of type $\nu$.

Let $\Sigma_\tau$ be a set of constant symbols of type $\tau \in \types$
and let $\Sigma := \bigcup_{\tau \in \types} \Sigma_\tau$ be the union of all typed symbols, called a \emph{signature}.
Let further $\Var$ denote a set of (typed) variable symbols. From these the terms
of HOL are constructed by the following abstract syntax ($\tau,\nu \in \types$):
\begin{equation*}
s,t ::= c_\tau \in \Sigma \; | \; X_\tau \in \Var \; | \; \left(\lambda X_\tau.\, s_\nu\right)_{\nu\tau}
  \; | \; \left(s_{\nu\tau} \; t_\tau\right)_\nu
\end{equation*}
The terms are called \textit{constants}, \textit{variables}, \textit{abstractions}
and \textit{applications}, respectively.  \index{Application} \index{Abstraction} \index{Constant} \index{Variable}
The set of all terms of type $\tau$ over a signature $\Sigma$
is denoted $\Lambda_\tau(\Sigma)$, and $\Lambda_\tau^c(\Sigma)$ is
used for closed terms, respectively. The notion of free and bound variables are defined as usual, and a term $t$ is
called \emph{closed} if $t$ does not contain any free variables.
It is assumed that the set $\Var$ contains countably infinitely many variable symbols for each
type $\tau \in \types$.

The type of a term is written as subscript but may be dropped by convention if
clear from the context (or if not important).
Also, parentheses are omitted whenever possible, and application is assumed to be
left-associative. Furthermore, the scope of an $\lambda$-abstraction's body reaches as far 
to the right as is consistent with the remaining brackets.
Nested applications $s\;t^1\;\ldots\;t^n$ may also be written in vector notation
$s\;\overline{t^n}$.

Variants of HOL often differ regarding the choice of the primitive
logical connectives in the signature
$\Sigma$. 
In any case, $s \neq t$ is used in the remainder as an abbreviation for $\neg(s = t)$.
Also, for simplicity, binary logical connectives
may be written in infix notation; e.g., the term $p_o \lor q_o$ formally represents 
the application $\left(\lor_{ooo} \; p_o \; q_o\right)$. Furthermore, binder notation is used for universal and existential
quantification: The term $\forall X_\tau.\, s_o$ is used as a
short-hand for $\Pi^\tau\left(\lambda X_\tau.\, s_o \right)$, where
$\Pi^\tau$ is a constant symbol.
%and analogously for existential quantification.
Finally, Leibniz-equality, denoted $\dot{=}$, is defined as
$\dot{=} := \lambda X_\tau.\, \lambda Y_\tau.\, \forall P_{o\tau}.\, (P\;X) \Rightarrow (P\;Y)$.
A \emph{$\Sigma$-formula} $s_o$ is a term from $s_o \in \Lambda_o(\Sigma)$ of type $o$ and a \emph{$\Sigma$-sentence}
if it is a closed $\Sigma$-formula. The reference to $\Sigma$ may be omitted if clear from the context.

In the following, variables are denoted by capital letters such as $X_\tau, Y_\tau, Z_\tau$, and, more specifically,
the variable symbols $P_o, Q_o$ and $F_{\nu\tau}, G_{\nu\tau}$ are used for predicate or Boolean variables
and variables of functional type, respectively. Analogously, lower case letters $s_\tau, t_\tau, u_\tau$ denote
general terms and $f_{\nu\tau}, g_{\nu\tau}$ are used for terms of functional type.

\paragraph{Semantics of HOL.}
The semantics of HOL, including the notions of $\Sigma$-models and $\Sigma$-Henkin models, 
is not discussed here; cf.~\cite{J6,DBLP:phd/dnb/Steen18,brown07,brown04} for details.

%%%%%%%%%%%%%%%%%%%%%%%%%%%%%%%%%%%%%%%%%%%%%%%%%%%%%%%%%%%%%%%%%%%%%%%%%%%%%%%%%%%%%%%%
%%%%%%%%%%%%%%%%%%%%%%%%%%%%%%%%%%%%%%%%%%%%%%%%%%%%%%%%%%%%%%%%%%%%%%%%%%%%%%%%%%%%%%%%
%%%%%%%%%%%%%%%%%%%%%%%%%%%%%%%%%%%%%%%%%%%%%%%%%%%%%%%%%%%%%%%%%%%%%%%%%%%%%%%%%%%%%%%%
\newcommand{\rneg}{\red{\neg}}
\newcommand{\req}{\mathbin{\red{=}}}
\newcommand{\rneq}{\mathbin{\red{\neq}}}
\newcommand{\rlor}{\mathbin{\red{\lor}}}
\newcommand{\rland}{\mathbin{\red{\land}}}
\newcommand{\rpi}{\red{\Pi}}
\newcommand{\rtop}{\red{\top}}
\newcommand{\rbot}{\red{\bot}}
\newcommand{\rleq}{{\red{\Leq}}}

\newcommand{\dneg}{\stackunder[-2pt]{\neg}{\cdot}}
\newcommand{\deq}{\mathbin{\stackunder[-1pt]{=}{\cdot}}}
\newcommand{\dlor}{\mathbin{\stackunder[0pt]{\lor}{\cdot}}}
\newcommand{\dland}{\mathbin{\stackunder[0pt]{\land}{\cdot}}}
\newcommand{\dforall}{\stackunder[0pt]{\forall}{\cdot}}
\newcommand{\dexists}{\stackunder[0pt]{\exists}{\cdot}}
\newcommand{\dtop}{\stackunder[0pt]{\top}{\cdot}}
\newcommand{\dbot}{\stackunder[0pt]{\bot}{\cdot}}

\newcommand{\rdneg}{\red{\dneg}}
\newcommand{\rdeq}{\mathbin{\red{\deq}}}
\newcommand{\rdlor}{\mathbin{\red{\dlor}}}
\newcommand{\rdland}{\mathbin{\red{\dland}}}
\newcommand{\rdforall}{\red{\dforall}}
\newcommand{\rdexists}{\red{\dexists}}
\newcommand{\rdtop}{\red{\dtop}}
\newcommand{\rdbot}{\red{\dbot}}

\section{Hintikka sets as defined by Brown~\cite{brown07} \label{sec:chad}}
In the formulation of HOL as employed by Brown~\cite{brown07},
the set of primitive logical connectives is not fixed and may be chosen
arbitrarily from the set $\{ \top_o, \bot_o, \neg_{oo}, \wedge_{ooo}, \vee_{ooo}, \supset_{ooo},
\equiv_{ooo}\} \cup \{\Pi_{o(o\tau)} | \tau\in \types\} \cup
\{\Sigma_{o(o\tau)} \mid \tau\in \types\} \cup \{=^\tau_{o\tau\tau} \mid \tau\in \types\}$. 
All remaining constant symbols from $\Sigma$ are called parameters.
%Depending on the choice of primitive logical connectives, the usual remaining connectives
%might be definable. 
In this presentation, the notation from Brown is adapted and we apply
the conventions from above.  For example, we denote general terms with
lower case symbols $s$ and $t$ instead of upper case letters as used
by Brown, and instead of $\mathit{wff}_\tau(\Sigma)$, which Brown uses to denote
the set of internal terms of type $\tau$, we use
$\Lambda_\tau(\Sigma)$.

Brown distinguishes between internal terms, the elements of
$\Lambda_\tau(\Sigma)$, and external propositions (meta-level propositions) in a set $prop(\Sigma)$, which are defined
as follows \cite[Def.~2.1.20]{brown07} (we use the color \red{red} to visually highlight Brown's
meta-level connectives and their associated Hintikka set properties in
the remainder of this paper):
\begin{itemize}
\item If $s\in\Lambda_o(\Sigma)$, then $s\in prop(\Sigma)$.
\item If $\alpha\in\types$ and $s,t \in  \Lambda_\alpha(\Sigma)$, then
  $[s\rdeq t]\in prop(\Sigma)$.
\item $\rdtop \in prop(\Sigma)$.
\item If $s\in prop(\Sigma)$, then $[\rdneg s]\in prop(\Sigma)$.
\item If $s, t \in prop(\Sigma)$, then $[s\rdlor t]\in prop(\Sigma)$.
\item If $s \in prop(\Sigma)$, then $[\rdforall X_\alpha s]\in prop(\Sigma)$.
\end{itemize}

Closed propositions $s\in prop(\Sigma)$ are also called sentences; we
then write $s\in sent(\Sigma)$.

Brown introduces the following Hintikka set properties for
external propositions.

\paragraph{Properties of extensional Hintikka sets $\H$~\cite[Def. 5.5.4]{brown07}:}
\mbox{}\\

$\red{\vec{\nabla}_c}:$ $s \notin \H$ or $\rdneg s \notin \H$.

$\red{\vec{\nabla}_{\beta\eta}}:$ If $s \in \H$, then $s^{\downarrow} \in \H$.

$\red{\vec{\nabla}_\bot}:$ $\rdneg \rdtop \notin \H$.

$\red{\vec{\nabla}_\neg}:$ If $\rdneg\rdneg s \in \H$, then $s \in H$.

$\red{\vec{\nabla}_\lor}:$ If $s \rdlor t \in \H$, then $s \in \H$ or $t \in \H$.

$\red{\vec{\nabla}_\land}:$ If $\rdneg(s \rdlor t) \in \H$,
  then $\rdneg s \in \H$ and $\rdneg t \in \H$.

$\red{\vec{\nabla}_\forall}:$ If $\rdforall X_\tau\, s \in \H$, then
  $[t/x]s\in \H$ for every closed term $t \in \Lambda^c_\tau(\Sigma)$.

$\red{\vec{\nabla}_\exists}:$ If $\rdneg(\rdforall X_\tau\, s) \in \H$,
then there is a parameter $p_\tau \in \Sigma_\tau$ such that
  $\rdneg([p/X]s) \in \H$.

$\red{\vec{\nabla}^\sharp}:$ If $s \in \H$, then $s^\sharp \in \H$. Also,
  if $\rdneg s \in \H$, then $\rdneg s^\sharp \in \H$.

$\red{\vec{\nabla}_{\mathit{m}}}:$ If $\rdneg\big(h\;\overline{s^n}\big) \in \H$
  and $\big(h\;\overline{t^n}\big) \in \H$, then there is an $i$ with $1 \leq i \leq n$
  such that $\rdneg\big(s^i \rdeq t^i\big) \in \H$.

$\red{\vec{\nabla}_{\mathit{dec}}}:$ If $p$ is a parameter and
  $\rdneg\big((p\;\overline{s^n}) \rdeq^\iota (p\;\overline{t^n})\big) \in \H$,
  then there is an $i$, $1 \leq i \leq n$, s.t.\
  $\rdneg\big(s^i \rdeq t^i\big) \in \H$.

$\red{\vec{\nabla}_{\mathfrak{b}}}:$ If $\rdneg(s \rdeq^o t) \in \H$, then 
  $\{s, \rdneg t\} \subseteq \H$ or $\{\rdneg s, t\} \subseteq \H$.

$\red{\vec{\nabla}_{\mathfrak{f}}}:$ If $\rdneg(f \rdeq^{\nu\tau} g) \in \H$,
  then there is a parameter $p_\tau \in \Sigma_\tau$
  such that $\rdneg(f\;p \rdeq^\nu g\; p) \in \H$.

$\red{\vec{\nabla}_{=}^o}:$ If $s \rdeq^o t \in \H$, then
  $\{s, t\} \subseteq \H$ or $\{\rdneg s, \rdneg t\} \subseteq \H$.

$\red{\vec{\nabla}_{=}^\rightarrow}:$ If $f \rdeq^{\nu\tau} g \in \H$, then
  $(f\;u \rdeq^\nu g \; u) \in \H$ for every closed
  term $u \in \Lambda^c_\tau(\Sigma)$.

$\red{\vec{\nabla}_{=}^r}:$ $\rdneg(s \rdeq^i s) \notin \H$.

$\red{\vec{\nabla}_{=}^u}:$ Suppose $(s \rdeq^i t) \in \H$ and $\rdneg(u \rdeq^i v) \in \H$.
  Then $\rdneg(s \rdeq^i u) \in \H$ or $\rdneg(t \rdeq^i v) \in \H$.
  Also, $\rdneg(s \rdeq^i v) \in \H$ or $\rdneg(t \rdeq^i u) \in \H$. \\

\noindent The collection of all sets of sentences satisfying all these properties
is called $\mathfrak{Hint}_{\beta\mathfrak{f}\mathfrak{b}}(\Sigma)$.

%%%%%%%%%%%%%%%%%%%%%%%%%%%%%%%%%%%%%%%%%%%%%%%%%%%%%%%%%%%%%%%%%%%%%%%%%%%%%%%%%%%%%%%%
%%%%%%%%%%%%%%%%%%%%%%%%%%%%%%%%%%%%%%%%%%%%%%%%%%%%%%%%%%%%%%%%%%%%%%%%%%%%%%%%%%%%%%%%
%%%%%%%%%%%%%%%%%%%%%%%%%%%%%%%%%%%%%%%%%%%%%%%%%%%%%%%%%%%%%%%%%%%%%%%%%%%%%%%%%%%%%%%%
%%%%%%%%%%%%%%%%%%%%%%%%%%%%%%%%%%%%%%%%%%%%%%%%%%%%%%%%%%%%%%%%%%%%%%%%%%%%%%%%%%%%%%%%
\newcommand{\bneg}{\blue{\neg}}
\newcommand{\beq}{\mathbin{\blue{=}}}
\newcommand{\bneq}{\mathbin{\blue{\neq}}}
\newcommand{\blor}{\mathbin{\blue{\lor}}}
\newcommand{\bland}{\mathbin{\blue{\land}}}
\newcommand{\bpi}{\blue{\Pi}}
\newcommand{\btop}{\blue{\top}}
\newcommand{\bbot}{\blue{\bot}}
\newcommand{\bleq}{{\blue{\Leq}}}

\section{Hintikka sets as defined by Steen~\cite{DBLP:phd/dnb/Steen18} \label{sec:steen}}
In the formulation of HOL as employed by Steen~\cite{DBLP:phd/dnb/Steen18},
the equality predicate, denoted $=^\tau$, for each type $\tau$, is assumed to be the only
logical connective present in the signature $\Sigma$, i.e.,
$\{ =^\tau \mid \tau \in \types \} \subseteq \Sigma$. All (potentially) remaining constant
symbols from $\Sigma$ are called parameters. Such signatures are also referred to as 
$\SigmaEq$.
A formulation of HOL based on equality as sole logical connective originates from Andrew's 
system $\mathcal{Q}_0$, cf.~\cite{andrews2002introduction} and the references therein.
The usual logical connectives are defined as follows
(technically our formulation is a modification of the one used by
Andrews~\cite{andrews2002introduction}, since the order of
terms in defining equations is swapped in many cases):\\

\begin{tabular}{lcl}
$\syntrue_o$ & := & ${=^o_{ooo}} \; {=^{ooo}_{o(ooo)(ooo)}} \; {=^o_{ooo}}$ \\
$\synfalse_o$ & := & $\left(\lambda P_o.\,P\right) =^{oo} \left(\lambda P_o.\, \syntrue\right)$ \\
$\neg_{oo}$ & := & $\lambda P_o.\, P =^o \synfalse$ \\
$\land_{ooo}$ & := & $\lambda P_o.\,\lambda Q_o.\, (\lambda F_{ooo}.\,F\;\syntrue\;\syntrue) =^{o(ooo)} (\lambda F_{ooo}.\,F\;P\;Q)$\\
$\lor_{ooo}$ & := & $\lambda P_o.\, \lambda Q_o.\, \neg \left(\neg P \land \neg Q \right)$ \\
$\Rightarrow_{ooo}$ & := & $\lambda P_o.\, \lambda Q_o.\, \neg P \lor Q$ \\
$\Leftrightarrow_{ooo}$ & := & $\lambda P_o.\, \lambda Q_o.\, P =^o Q$ \\
$\Pi^\tau_{o(o\tau)}$ & := & $\lambda P_{o\tau}.\,P =^{o\tau} \lambda X_\tau.\,\syntrue$
\end{tabular} \\

The (primitive and defined) connectives of this formulation of HOL are written in \blue{blue} in the following,
as are the properties below.

%%%%%%%%%%%%%%%%%%%%%%%%%%%%%%%%%%%%%%%%%%%%%%%%%%%%%%%%%%%%%%%%%%%%%%%%%%%%%%%%%%%%%%%%
\paragraph{Properties for acceptable Hintikka sets~\cite[Def. 3.15]{DBLP:phd/dnb/Steen18}}
\mbox{}\\

  $\blue{\vec{\nabla}_c}$:  $s \notin \mathcal{H}$ or $\blue{\neg} s \notin \mathcal{H}$.  
  
  $\blue{\vec{\nabla}_{\beta\eta}}$:  If $s \equiv_{\beta\eta} t$ and $s \in \mathcal{H}$, then $t \in \mathcal{H}$. 
  
  $\blue{\vec{\nabla}_{=}^{r}}$:  $(s \blue{\neq} s) \notin \mathcal{H}$. 
  
  $\blue{\vec{\nabla}_{=}^{s}}$:  If $u[s]_p \in \mathcal{H}$ and $s \blue{=} t \in \mathcal{H}$ then $u[t]_p \in \mathcal{H}$. 
  
  $\blue{\vec{\nabla}_{\mathfrak{b}}^+}$:  If $s \blue{=} t \in \mathcal{H}$, then $\{s,t\} \subseteq \mathcal{H}$ or
  $\{\blue{\neg} s, \blue{\neg} t\} \subseteq \mathcal{H}$. 
  
  $\blue{\vec{\nabla}_{\mathfrak{b}}^-}$:  If $s \blue{\neq} t \in \mathcal{H}$, then $\{s,\blue{\neg} t\} \subseteq \mathcal{H}$ or
  $\{\blue{\neg} s, t\} \subseteq \mathcal{H}$. 
  
  $\blue{\vec{\nabla}_{\mathfrak{f}}^+}$:  If $f_{\nu\tau} \blue{=} g_{\nu\tau} \in \mathcal{H}$, then $f \; s \blue{=} g \; s \in \mathcal{H}$ 
  for any closed term $s \in \Lambda^{\mathrm{c}}_\tau(\Sigma)$. 
  
  $\blue{\vec{\nabla}_{\mathfrak{f}}^-}$:  If $ f_{\nu\tau} \blue{\neq} g_{\nu\tau}  \in \mathcal{H}$, then $f \; w \blue{\neq} g  \; w \in \mathcal{H}$ 
  for some parameter $w \in \Sigma_\tau$. 
  
  $\blue{\vec{\nabla}_m}$:  If $s,t$ are atomic and $s, \blue{\neg} t \in \mathcal{H}$, then $s \blue{\neq} t  \in \mathcal{H}$. 
  
  $\blue{\vec{\nabla}_d}$:  If $h \; \overline{s^n} \blue{\neq} h \; \overline{t^n} \in \mathcal{H}$, then there is an
  $i$ with $1 \leq i \leq n$ such that $s^i \blue{\neq} t^i \in \mathcal{H}$.\\
  
\noindent The collection of all sets satisfying these properties is called $\mathfrak{H}$. Every element
$\H \in \mathfrak{H}$ is called acceptable.

\begin{definition}
A set $\H$ of formulas is called \emph{saturated} iff $s \in \H$ or $\blue{\neg}s \in \H$ for every closed formula $s$.
\end{definition}

\paragraph{Derived properties}

\begin{lemma}[Basic properties]\label{lemma:basic}
Let $\H \in \mathfrak{H}$. Then it holds that
\begin{enumerate}[(a)]
  \item \label{lemma:basic1} $\bot \notin \H$
  \item \label{lemma:basic2} $\blue{\neg}\top \notin \H$
  \item \label{lemma:basic3} $\top \blue{=} \bot \notin \H$
  \item \label{lemma:basic4}  If $s_o \blue{=} \top \in \H$ or $\top \blue{=} s_o \in \H$
                              ($s_o \blue{\neq} \bot \in \H$ or $\bot \blue{\neq} s_o \in \H$), then
                              $\{s_o,\top\} \subseteq \H$ ($\{s_o,\blue{\neg}\bot\} \subseteq \H$)
  \item \label{lemma:basic5}  If $s_o \blue{=} \bot \in \H$ or $\bot \blue{=} s_o \in \H$
                              ($s_o \blue{\neq} \top \in \H$ or $\top \blue{\neq} s_o \in \H$), then
                              $\{\blue{\neg}s_o,\blue{\neg}\bot\} \subseteq \H$ ($\{\blue{\neg}s_o,\top\} \subseteq \H$)
  \item \label{lemma:basic6} If $\blue{\neg}\bot \in \H$, then $\top \in \H$
  \item \label{lemma:basic7} If $\top \in \H$, then $\blue{\neg}\bot \in \H$
  \item \label{lemma:basic8} If $s \blue{=} t \in \H$ and $t \blue{=} u \in \H$, then $s \blue{=} u \in \H$.
\end{enumerate}
\end{lemma}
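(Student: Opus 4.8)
The plan is to unfold every defined connective into its primitive-equality form and then discharge each clause with one or two of the acceptable-Hintikka-set conditions. The definitional facts I will use repeatedly are $\blue{\neg}s \equiv_\beta (s \blue{=^o} \bot)$, $\bot = (\lambda P_o.\,P)\,\blue{=^{oo}}\,(\lambda P_o.\,\top)$, and --- crucially --- that $\top$ is \emph{literally} the reflexive equation $(\blue{=^o})\,\blue{=^{ooo}}\,(\blue{=^o})$, i.e.\ the equality constant of type $o$ equated with itself at type $ooo$. Consequently $\blue{\neg}\top$ is the reflexive disequation $(\blue{=^o})\,\blue{\neq^{ooo}}\,(\blue{=^o})$ and $\blue{\neg}\bot$ is the functional disequation $(\lambda P_o.\,P)\,\blue{\neq^{oo}}\,(\lambda P_o.\,\top)$. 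I will treat the items in the order (b), (c), (a), (d), (e), (g), (f), (h), because the case splits of the later clauses are closed off using the earlier ones.

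Item (b) is then immediate: $\blue{\neg}\top$ is an instance of $s \blue{\neq} s$ (with $s \equiv \blue{=^o}$ at type $ooo$), so $\blue{\vec{\nabla}_{=}^{r}}$ gives $\blue{\neg}\top \notin \H$ with no reduction at all; (c) follows since $\top \blue{=} \bot \equiv_\beta \blue{\neg}\top$ and $\H$ is closed under $\blue{\vec{\nabla}_{\beta\eta}}$. For (a) I assume $\bot \in \H$, apply $\blue{\vec{\nabla}_{\mathfrak{f}}^{+}}$ to the functional equation $\bot$ with the closed instance $\blue{\neg}\top$, and $\beta\eta$-normalise to $\blue{\neg}\top \blue{=^o} \top \in \H$; both branches produced by $\blue{\vec{\nabla}_{\mathfrak{b}}^{+}}$ then contain $\blue{\neg}\top$, contradicting (b). Items (d) and (e) are routine two-way case analyses: I apply $\blue{\vec{\nabla}_{\mathfrak{b}}^{+}}$ to the equational hypotheses and $\blue{\vec{\nabla}_{\mathfrak{b}}^{-}}$ to the disequational ones, and in each case discard the branch asserting $\bot \in \H$ or $\blue{\neg}\top \in \H$ by (a) respectively (b), leaving exactly the claimed pair in $\H$.

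The two clauses requiring real work are (g) and (f). For (g), from $\top \in \H$ I apply $\blue{\vec{\nabla}_{\mathfrak{f}}^{+}}$ twice, instantiating both times with the closed term $\bot$, to obtain $(\bot \blue{=^o} \bot)\,\blue{=^o}\,(\bot \blue{=^o} \bot) \in \H$; $\blue{\vec{\nabla}_{\mathfrak{b}}^{+}}$ then yields $\bot \blue{=^o} \bot \in \H$ or $\blue{\neg}(\bot \blue{=^o} \bot) \in \H$, and the latter is $\bot \blue{\neq^o} \bot$, ruled out by $\blue{\vec{\nabla}_{=}^{r}}$; since $\bot \blue{=^o} \bot \equiv_\beta \blue{\neg}\bot$, closure under $\blue{\vec{\nabla}_{\beta\eta}}$ gives $\blue{\neg}\bot \in \H$. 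For (f) I read $\blue{\neg}\bot$ as the functional disequation above and apply $\blue{\vec{\nabla}_{\mathfrak{f}}^{-}}$, obtaining a parameter $w_o$ with $w \blue{\neq^o} \top \in \H$ (after $\beta\eta$-normalisation); $\blue{\vec{\nabla}_{\mathfrak{b}}^{-}}$ forces $\{w,\blue{\neg}\top\} \subseteq \H$ or $\{\blue{\neg}w,\top\} \subseteq \H$, the first being impossible by (b), so $\top \in \H$. Finally (h) is a single use of $\blue{\vec{\nabla}_{=}^{s}}$: viewing $s \blue{=} t \in \H$ as $u[t]_p$ with the hole at the right argument, rewriting along $t \blue{=} u \in \H$ produces $s \blue{=} u \in \H$.

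The main obstacle is (f) and (g). The real content there is twofold: first, one must choose the correct syntactic reading of $\blue{\neg}\top$ and $\blue{\neg}\bot$ --- as reflexive, respectively functional, disequations rather than their $\beta$-normal forms --- in order to make $\blue{\vec{\nabla}_{=}^{r}}$ and $\blue{\vec{\nabla}_{\mathfrak{f}}^{\pm}}$ applicable at all; second, each $\blue{\vec{\nabla}_{\mathfrak{b}}^{\pm}}$ split throws up a spurious branch that must be eliminated. This is precisely why the ordering is forced: the branch $\{w,\blue{\neg}\top\}$ arising in (f) is killed only by (b), so (b) (and through it the reflexivity condition) has to be in place first. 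One further point of care is the \emph{direction} of $\blue{\vec{\nabla}_{=}^{s}}$ in (h): the condition only licenses left-to-right rewriting, so the occurrence to be rewritten must be the $t$ inside $s \blue{=} t$ and the driving equation must be $t \blue{=} u$, not the reverse.
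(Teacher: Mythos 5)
Your proof is correct and follows essentially the same route as the paper's: unfold $\top$, $\bot$, $\blue{\neg}$ into primitive equations, read $\blue{\neg}\top$ as a reflexive disequation for $\blue{\vec{\nabla}_{=}^{r}}$, and drive everything else with $\blue{\vec{\nabla}_{\mathfrak{f}}^{\pm}}$, $\blue{\vec{\nabla}_{\mathfrak{b}}^{\pm}}$ and $\blue{\vec{\nabla}_{=}^{s}}$, discarding spurious branches via the earlier items. Your minor deviations (closing (a) via (b) rather than $\blue{\vec{\nabla}_c}$, getting (c) directly from $\top \blue{=} \bot \equiv_\beta \blue{\neg}\top$, and instantiating with $\bot$ in (g) to save one $\blue{\vec{\nabla}_{\mathfrak{b}}^{+}}$ split) are all sound and do not change the substance of the argument.
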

\begin{proof}
Let $\H \in \mathfrak{H}$ be an acceptable Hintikka set.
\begin{enumerate}[(a)]
  \item Assume $\bot \in \H$. By definition of $\bot$  it holds 
        $(\lambda P.\, P) \blue{=} (\lambda P.\, \top) \in \H$. Hence, by 
        $\blue{\vec{\nabla}_{\mathfrak{f}}^+}$ and
        $\blue{\vec{\nabla}_{\beta\eta}}$, 
        it follows that $w \blue{=} \top \in \H$ for any closed term $w$.
        Taking $w \equiv \blue{\neg}\top$ we obtain
        $\blue{\neg}\top \blue{=} \top \in \H$. But then, 
        $\blue{\vec{\nabla}_{\mathfrak{b}}^+}$ gives us a contradiction to 
        $\blue{\vec{\nabla}_{c}}$. Hence $\bot \notin \H$.
  \item Assume $\blue{\neg}\top \in \H$. By definition of
    $\bot$ it holds $(\blue{=} \blue{\neq} \blue{=})\in\H$, which contradicts 
        $\blue{\vec{\nabla}_{=}^r}$. % directly, since $\blue{\neg}\top \equiv (\blue{=} \blue{\neq} \blue{=})$.
        Hence, $\blue{\neg}\top \notin \H$.
  \item Assume $\top \blue{=} \bot \in \H$. Applying $\blue{\vec{\nabla}_{\mathfrak{b}}^+}$
        gives us that either $\{\top,\bot\} \subseteq \H$ or $\{\blue{\neg}\top, \blue{\neg}\bot\} \subseteq \H$.
        Either case is impossible by either \ref{lemma:basic1} or
        \ref{lemma:basic2} of this lemma. Hence, 
        $\top \blue{=} \bot \notin \H$.
  \item Let $s_o \blue{=} \top \in \H$ or $\top \blue{=} s_o \in \H$. In both cases it follows
        by $\blue{\vec{\nabla}_{\mathfrak{b}}^+}$ that either
        $\{s,\top\} \subseteq \H$ or $\{\blue{\neg}s, \blue{\neg}\top\} \subseteq \H$.
        Since the latter case contradicts \ref{lemma:basic2} from above, it follows that $\{s,\top\} \subseteq \H$.
        The negative cases are analogous using $\blue{\vec{\nabla}_{\mathfrak{b}}^-}$.
  \item Let $s_o \blue{=} \bot \in \H$ or $\bot \blue{=} s_o \in \H$. In both cases it follows
        by $\blue{\vec{\nabla}_{\mathfrak{b}}^+}$ that either
        $\{s,\bot\} \subseteq \H$ or $\{\blue{\neg}s, \blue{\neg}\bot\} \subseteq \H$.
        Since the former case contradicts \ref{lemma:basic1} from above, it follows that
        $\{\blue{\neg}s, \blue{\neg}\bot\} \subseteq \H$.
        The negative case is analogous using $\blue{\vec{\nabla}_{\mathfrak{b}}^-}$.  
  \item Let $\blue{\neg}\bot \in \H$. Then, by definition of $\bot$,
        $\big((\lambda P.\, P) \blue{\neq} (\lambda P.\, \top)\big) \in \H$.
        By $\blue{\vec{\nabla}_{\mathfrak{f}}^-}$ and $\blue{\vec{\nabla}_{\beta\eta}}$
        it holds that $p \blue{\neq} \top \in \H$ for some parameter $p$.
        By $\blue{\vec{\nabla}_{\mathfrak{b}}^-}$ it follows that either
        $\{p, \blue{\neg}\top\} \subseteq \H$
        or $\{\blue{\neg}p, \top\} \subseteq \H$.
        Since the former case is ruled out by \ref{lemma:basic2} from above, the latter case yields the desired result.
  \item Let $\top \in \H$, that is, $\blue{=}^o\blue{=}^{ooo}\blue{=}^o \in \H$.
        By $\blue{\vec{\nabla}_{\mathfrak{f}}^+}$ and $\blue{\vec{\nabla}_{\beta\eta}}$
        it follows that $(s_o \blue{=} t_o)\blue{=}(s_o\blue{=}t_o) \in \H$ for every
        two closed formulas $s,t$. 
        For $s \equiv t \equiv \blue{\neg}\bot$ it follows that 
        $(\blue{\neg}\bot \blue{=}
        \blue{\neg}\bot)\blue{=}(\blue{\neg}\bot\blue{=}\blue{\neg}\bot)
        \in \H$, 
        and hence, by $\blue{\vec{\nabla}_{\mathfrak{b}}^+}$, 
        either $\blue{\neg}\bot \blue{=} \blue{\neg}\bot \in \H$ or
        $\blue{\neg}\bot \blue{\neq} \blue{\neg}\bot \in \H$. Since the latter case is ruled out
        by $\blue{\vec{\nabla}_{=}^{r}}$, it follows that $\blue{\neg}\bot \blue{=} \blue{\neg}\bot \in \H$.
        Again, by $\blue{\vec{\nabla}_{\mathfrak{b}}^+}$, it follows that 
        either $\blue{\neg}\bot \in \H$ or
        $\blue{\neg}(\blue{\neg}\bot) \in \H$. The latter case is impossible by
        $\blue{\vec{\nabla}_{=}^r}$ since
        $\blue{\neg}(\blue{\neg} \bot) \equiv (\bot
        \blue{\neq} \bot)$ and hence $\blue{\neg}\bot \in \H$.
  \item Let $s \blue{=} t \in \H$ and $t \blue{=} u \in \H$. By $\blue{\vec{\nabla}_{=}^{s}}$ it
        follows directly that $s \blue{=} u \in \H$.
\end{enumerate}
\end{proof}

\begin{lemma}[Properties of usual connectives]\label{lemma:usualconnectives}
Let $\H \in \mathfrak{H}$. Then it holds that
\begin{enumerate}[(a)]
  \item \label{lemma:usualconnectives1} If $\blue{\neg}\blue{\neg} s_o \in \H$, then $s \in \H$
  \item \label{lemma:usualconnectives2} If $(s_o \blue{\lor} t_o) \in \H$, then $s \in \H$ or $t \in \H$.
  \item \label{lemma:usualconnectives3} If $(s_o \blue{\land} t_o) \in \H$, then $s \in \H$ and $t \in \H$.
  \item \label{lemma:usualconnectives4} If $\blue{\Pi^\alpha}\;F \in \H$, then $F\;s \in \H$ for every closed term $s$.
  \item \label{lemma:usualconnectives5} If $\blue{\neg\Pi^\alpha}\;F \in \H$, then $\blue{\neg}(F\;w) \in \H$ for some parameter $w \in \Sigma$.
\end{enumerate}
\end{lemma}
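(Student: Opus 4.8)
The plan is to treat all five parts uniformly: each defined connective of Steen's $\beq$-based signature collapses, after unfolding its definition and $\beta\eta$-normalising, to a primitive equation or disequation, so that every claim reduces to an instance of functional or Boolean extensionality combined with the already-proved Lemma~\ref{lemma:basic}. Concretely I would first record three normal forms that drive the cases. Using $\bneg\phi \equiv_{\beta\eta} (\phi \beq \bbot)$ together with the abbreviation $u \bneq v := \bneg(u \beq v)$, one obtains $\bneg\bneg s \equiv_{\beta\eta} (s \bneq \bbot)$ and, unfolding $\bpi^\alpha$, $\bpi^\alpha F \equiv_{\beta\eta} (F \beq \lambda X_\alpha.\, \btop)$. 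Unfolding $\blor$, then $\bland$, and applying the two identities above twice gives $s \blor t \equiv_{\beta\eta} \big((\lambda F_{ooo}.\, F\,\btop\,\btop) \bneq (\lambda F_{ooo}.\, F\,(\bneg s)\,(\bneg t))\big)$. Membership is transported along all these $\beta\eta$-equalities by $\blue{\vec{\nabla}_{\beta\eta}}$.

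Part (\ref{lemma:usualconnectives1}) is then immediate: from $\bneg\bneg s \in \H$ I obtain $(s \bneq \bbot) \in \H$, and the disequational case of Lemma~\ref{lemma:basic}(\ref{lemma:basic4}) yields $\{s, \bneg\bbot\} \subseteq \H$, hence $s \in \H$. The quantifier parts (\ref{lemma:usualconnectives4}) and (\ref{lemma:usualconnectives5}) are the positive and negative faces of the same move on $\bpi^\alpha F \equiv_{\beta\eta} (F \beq \lambda X.\,\btop)$: applying $\blue{\vec{\nabla}_{\mathfrak{f}}^+}$ at a closed $s$ gives $(F\,s \beq \btop) \in \H$ and Lemma~\ref{lemma:basic}(\ref{lemma:basic4}) gives $F\,s \in \H$; applying $\blue{\vec{\nabla}_{\mathfrak{f}}^-}$ to $(F \bneq \lambda X.\,\btop) \in \H$ supplies a parameter $w$ with $(F\,w \bneq \btop) \in \H$, and Lemma~\ref{lemma:basic}(\ref{lemma:basic5}) gives $\bneg(F\,w) \in \H$.

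For part (\ref{lemma:usualconnectives3}) I would unfold $s \bland t$ to the functional equation $(\lambda F_{ooo}.\, F\,\btop\,\btop) \beq (\lambda F_{ooo}.\, F\,s\,t)$ and feed $\blue{\vec{\nabla}_{\mathfrak{f}}^+}$ the two closed projections $\lambda A_o.\,\lambda B_o.\, A$ and $\lambda A_o.\,\lambda B_o.\, B$; after $\beta$-reduction these deliver $(\btop \beq s) \in \H$ and $(\btop \beq t) \in \H$, so $s, t \in \H$ by Lemma~\ref{lemma:basic}(\ref{lemma:basic4}). The disjunction, part (\ref{lemma:usualconnectives2}), is the step I expect to be the main obstacle, because $\blor$ is defined only through $\bland$ and $\bneg$ and there is no direct introduction rule to exploit. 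The key realisation is the third normal form above: $s \blor t \in \H$ is a \emph{functional disequation}, so $\blue{\vec{\nabla}_{\mathfrak{f}}^-}$ produces a parameter $w_{ooo}$ with $\big(w\,\btop\,\btop \bneq w\,(\bneg s)\,(\bneg t)\big) \in \H$, and decomposition $\blue{\vec{\nabla}_d}$ on the common head $w$ returns a position $i$ with $(\btop \bneq \bneg s) \in \H$ or $(\btop \bneq \bneg t) \in \H$. Lemma~\ref{lemma:basic}(\ref{lemma:basic5}) turns either into $\bneg\bneg s \in \H$ or $\bneg\bneg t \in \H$, and part (\ref{lemma:usualconnectives1}), already available, finishes with $s \in \H$ or $t \in \H$. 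The points needing care are that $\blue{\vec{\nabla}_d}$ genuinely applies (both sides share the head $w$ and have equal arity) and that each abbreviation-unfolding followed by $\beta\eta$-normalisation is faithful; once these are checked the remaining calculations are routine.
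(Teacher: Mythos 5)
Your proposal is correct and follows essentially the same route as the paper's own proof: unfold each defined connective to a primitive (dis)equation via $\blue{\vec{\nabla}_{\beta\eta}}$, apply $\blue{\vec{\nabla}_{\mathfrak{f}}^{\pm}}$ (and $\blue{\vec{\nabla}_d}$ for disjunction, with the same projection terms for conjunction), and conclude via the basic properties of $\top$ and $\bot$. The only cosmetic difference is that you route the final steps of (a), (b) and (e) through Lemma~\ref{lemma:basic}\ref{lemma:basic4}/\ref{lemma:basic5} where the paper applies $\blue{\vec{\nabla}_{\mathfrak{b}}^-}$ directly and eliminates a case by Lemma~\ref{lemma:basic}\ref{lemma:basic1}/\ref{lemma:basic2}, which is the same argument packaged differently.
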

\begin{proof}
Let $\H \in \mathfrak{H}$ be an acceptable Hintikka set.
\begin{enumerate}[(a)]
  \item Let $\blue{\neg}\blue{\neg} s_o \in \H$. 
        By definition of $\blue{\neg}$ and
        $\blue{\vec{\nabla}_{\beta\eta}}$ it holds $(s \blue{\neq} \bot) \in \H$.
        Hence, by $\blue{\vec{\nabla}_{\mathfrak{b}}^-}$, either
          $\{s, \blue{\neg}\bot \} \subseteq \H$, or
          $\{\blue{\neg} s, \bot \} \subseteq \H$.
        As the latter case is impossible by Lemma~\ref{lemma:basic}\ref{lemma:basic1}, it follows that
        $\{s, \blue{\neg}\bot \} \subseteq \H$ and, in particular, that $s \in \H$.
  \item Let $(s_o \blue{\lor} t_o) \in \H$. By definition of
    $\blue{\lor}$, $\blue{\neg}$ and
        $\blue{\vec{\nabla}_{\beta\eta}}$  it holds
        $\big(\big(\lambda P.\, P\,\top\,\top\big) \blue{\neq}
          \lambda P.\,P\;(\blue{\neg}s)\;(\blue{\neg}t) \big) \in \H$. Hence, by $\blue{\vec{\nabla}_{\mathfrak{f}}^-}$ and
        $\blue{\vec{\nabla}_{\beta\eta}}$, 
        it follows that $(p\,\top\,\top) \blue{\neq} \big( p\;(\blue{\neg}s)\;(\blue{\neg}t)\big)$ for some
        parameter $p \in \Sigma$. By $\blue{\vec{\nabla}_{d}}$ either
          (i) $\top \blue{\neq} \blue{\neg}s \in \H$ or
          (ii) $\top \blue{\neq} \blue{\neg}t \in \H$. Hence, by 
        $\blue{\vec{\nabla}_{\mathfrak{b}}^-}$, applied to both cases,
        it holds that either
          (i) $\blue{\neg}\blue{\neg}s \in \H$, or
          (ii) $\blue{\neg}\blue{\neg}t \in \H$
        (because $\blue{\neg}\top \notin \H$ by Lemma~\ref{lemma:basic}\ref{lemma:basic2}).
        It follows that $s \in \H$ or $t \in \H$ by \ref{lemma:usualconnectives1} of this lemma.
  \item Let $(s_o \blue{\land} t_o) \in \H$. By definition of
    $\blue{\land}$ and
        $\blue{\vec{\nabla}_{\beta\eta}}$  it holds
        $(\lambda P.\, P\;\top\;\top) \blue{=} (\lambda P.\, P\;s\;t) \in\H$.
        Hence, by $\blue{\vec{\nabla}_{\mathfrak{f}}^+}$ and
        $\blue{\vec{\nabla}_{\beta\eta}}$, 
        it follows that $(w\;\top\;\top) \blue{=} (w\;s\;t) \in \H$ for every closed term $w$.
        By $\blue{\vec{\nabla}_{\beta\eta}}$, using
        $w \equiv \lambda x.\, \lambda y.\, x$ and 
        $w \equiv \lambda
        x.\, \lambda y.\, y$, it holds
         $\top \blue{=} s \in \H$ and $\top \blue{=} t \in \H$, respectively.
        Application of Lemma~\ref{lemma:basic}\ref{lemma:basic4} yields the desired result.
  \item Let $\blue{\Pi^\alpha}\;s \in \H$. By definition of
    $\blue{\Pi^\alpha}$ and
        $\blue{\vec{\nabla}_{\beta\eta}}$  it holds
        $\big(s \blue{=} \lambda x.\, \top \big) \in\H$. Hence, by 
        $\blue{\vec{\nabla}_{\mathfrak{f}}^+}$ and $\blue{\vec{\nabla}_{\beta\eta}}$,
        it follows that $s\;t \blue{=} \top \in \H$ for every closed term $t$.
        Application of Lemma~\ref{lemma:basic}\ref{lemma:basic4} yields the desired result.
  \item Let $\blue{\neg\Pi^\alpha}\;s \in \H$.  By definition of
    $\blue{\Pi^\alpha}$ and
        $\blue{\vec{\nabla}_{\beta\eta}}$  it holds that 
        $\big(s \blue{\neq} (\lambda x.\, \top)\big)\in\H$. Hence, by
        $\blue{\vec{\nabla}_{\mathfrak{f}}^-}$ and $\blue{\vec{\nabla}_{\beta\eta}}$,
        it follows that $(s\;p) \blue{\neq} \top \in \H$ for some parameter $p$.
        Application of Lemma~\ref{lemma:basic}\ref{lemma:basic5} yields the desired result.
\end{enumerate}
\end{proof}

\begin{lemma}[Properties of Leibniz equality]\label{lemma:additional}
Let $\H \in \mathfrak{H}$. Then it holds that
\begin{enumerate}[(a)]
  \item \label{lemma:additional2} If $s \blue{\Leq} t \in \H$, then $s \blue{=} t \in \H$.
  \item \label{lemma:additional3} If $\blue{\neg}(s \blue{\Leq} t) \in \H$, then $s \blue{\neq} t \in \H$.
  \item \label{lemma:additional4} $\blue{\neg}(s \blue{\Leq} s) \notin \H$.
  \item \label{lemma:additional5} If $u[s]_p \in \H$ and $s \blue{\Leq} t \in \H$, then $u[t]_p \in \H$.
  \item \label{lemma:additional6} If $s \blue{\Leq} t \in \H$, then $t \blue{\Leq} s \in \H$.
  \item \label{lemma:additional7} If $s \blue{\Leq} t \in \H$ and $t \blue{\Leq} u \in \H$, then $s \blue{\Leq} u \in \H$.
\end{enumerate}
\end{lemma}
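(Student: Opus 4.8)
The plan is to dispatch the six items in order, reducing each to the Hintikka conditions together with the already-proved Lemmas~\ref{lemma:basic} and~\ref{lemma:usualconnectives}, and to isolate the symmetry claim~\ref{lemma:additional6} as the one genuinely delicate point. For~\ref{lemma:additional2} I would first rewrite $s \blue{\Leq} t \in \H$ with $\blue{\vec{\nabla}_{\beta\eta}}$ into $\blue{\Pi^\tau}\,(\lambda P.\,(P\,s) \Rightarrow (P\,t)) \in \H$ and instantiate the quantifier via Lemma~\ref{lemma:usualconnectives}\ref{lemma:usualconnectives4} with the closed predicate $P := \lambda Z.\,(s \blue{=} Z)$. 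After $\beta\eta$-normalisation this gives $(s \blue{=} s) \Rightarrow (s \blue{=} t) \in \H$; unfolding $\Rightarrow$ into a disjunction and applying Lemma~\ref{lemma:usualconnectives}\ref{lemma:usualconnectives2} yields $s \blue{\neq} s \in \H$ or $s \blue{=} t \in \H$, and the first disjunct is ruled out by $\blue{\vec{\nabla}_{=}^{r}}$.

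For~\ref{lemma:additional3} I would dually unfold $\blue{\neg}(s \blue{\Leq} t)$ and apply Lemma~\ref{lemma:usualconnectives}\ref{lemma:usualconnectives5} to obtain a parameter $w$ with $\blue{\neg}\big((w\,s) \Rightarrow (w\,t)\big) \in \H$. Expanding the definitions of $\Rightarrow$, $\blue{\lor}$, $\blue{\land}$ and $\blue{\neg}$ turns this into a doubly negated conjunction, and repeated use of the double-negation and conjunction properties (Lemma~\ref{lemma:usualconnectives}\ref{lemma:usualconnectives1} and~\ref{lemma:usualconnectives3}) extracts $w\,s \in \H$ and $\blue{\neg}(w\,t) \in \H$. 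As $w$ is a parameter, both $w\,s$ and $w\,t$ are atomic, so $\blue{\vec{\nabla}_m}$ gives $(w\,s) \blue{\neq} (w\,t) \in \H$ and then $\blue{\vec{\nabla}_d}$ (with head $w$ and a single argument) yields $s \blue{\neq} t \in \H$. Claim~\ref{lemma:additional4} is immediate afterwards: $\blue{\neg}(s \blue{\Leq} s) \in \H$ would give $s \blue{\neq} s \in \H$ by~\ref{lemma:additional3}, contradicting $\blue{\vec{\nabla}_{=}^{r}}$.

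The two substitution-style claims are then routine. For~\ref{lemma:additional5} I pass from $s \blue{\Leq} t \in \H$ to $s \blue{=} t \in \H$ via~\ref{lemma:additional2} and apply $\blue{\vec{\nabla}_{=}^{s}}$ to the resulting primitive equation. For transitivity~\ref{lemma:additional7} I would read the member $s \blue{\Leq} t \in \H$ as a formula whose right-hand Leibniz argument is $t$ and use the Leibniz-substitution property~\ref{lemma:additional5}, now with the equation $t \blue{\Leq} u \in \H$, to replace that $t$ by $u$, giving $s \blue{\Leq} u \in \H$; crucially this is a forward replacement and needs no symmetry.

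The hard part is symmetry,~\ref{lemma:additional6}. Unfolding both sides I have $\blue{\Pi^\tau}\,(\lambda P.\,(P\,s) \Rightarrow (P\,t)) \in \H$ and must reach $\blue{\Pi^\tau}\,(\lambda P.\,(P\,t) \Rightarrow (P\,s)) \in \H$. Since none of the conditions let one \emph{introduce} a universal quantifier, the only option is to rewrite the former member into the latter with $\blue{\vec{\nabla}_{=}^{s}}$, using the equation $s \blue{=} t \in \H$ supplied by~\ref{lemma:additional2}. This requires turning the $t$ inside $(P\,t)$ into an $s$, i.e.\ using $s \blue{=} t$ in the reverse direction, whereas a single forward application of $\blue{\vec{\nabla}_{=}^{s}}$ to $s \blue{=} t$ only ever produces $t \blue{=} t$. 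I therefore expect the real content of~\ref{lemma:additional6} to be symmetry of primitive equality, $s \blue{=} t \in \H \Rightarrow t \blue{=} s \in \H$, and this is the step I expect to be the main obstacle. I would resolve it by using $\blue{\vec{\nabla}_{=}^{s}}$ as a genuine replacement-of-equals congruence: from $(s \blue{=} t) \in \H$, replacing the right-hand $t$ by $s$ gives $(s \blue{=} s) \in \H$, and then replacing the left-hand $s$ by $t$ gives $(t \blue{=} s) \in \H$. With both $s \blue{=} t$ and $t \blue{=} s$ in $\H$, two (now forward) applications of $\blue{\vec{\nabla}_{=}^{s}}$, together with $\blue{\vec{\nabla}_{\beta\eta}}$ to unfold and refold the definition, rewrite $(P\,s)$ to $(P\,t)$ and $(P\,t)$ to $(P\,s)$ inside the unfolded $s \blue{\Leq} t$, delivering $t \blue{\Leq} s \in \H$.
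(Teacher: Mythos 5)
Items~\ref{lemma:additional2}, \ref{lemma:additional3}, \ref{lemma:additional4}, \ref{lemma:additional5} and \ref{lemma:additional7} of your proof are correct and essentially identical to the paper's argument (you route the quantifier instantiation through Lemma~\ref{lemma:usualconnectives}\ref{lemma:usualconnectives4}/\ref{lemma:usualconnectives5} where the paper applies $\blue{\vec{\nabla}_{\mathfrak{f}}^{+}}$/$\blue{\vec{\nabla}_{\mathfrak{f}}^{-}}$ directly to the unfolded $\blue{\Pi}$, but that is the same computation).

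The symmetry claim~\ref{lemma:additional6} is where your proof has a genuine gap. You use $\blue{\vec{\nabla}_{=}^{s}}$ ``as a genuine replacement-of-equals congruence'' and, from $(s \blue{=} t) \in \H$, replace the right-hand $t$ by $s$ to get $(s \blue{=} s) \in \H$. That step is not licensed: $\blue{\vec{\nabla}_{=}^{s}}$ only rewrites occurrences of the \emph{left-hand side} of the available equation by its right-hand side, so from $s \blue{=} t$ you may turn an $s$ into a $t$, never a $t$ into an $s$ (the only forward consequence of applying it to $s \blue{=} t$ itself is $t \blue{=} t$, as you yourself observe one sentence earlier). Indeed, the paper's summary table marks both reflexivity $s \blue{=} s \in \H$ and symmetry of primitive equality as holding only for \emph{saturated} sets, so they cannot be unconditional consequences of the conditions you invoke; your derivation would make them unconditional. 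The paper sidesteps primitive-equality symmetry entirely: it instantiates the unfolded $s \blue{\Leq} t$ with the \emph{constant} predicate $w \equiv \lambda X.\, (t \blue{\Leq} s)$, which after $\beta$-reduction yields $\blue{\neg}(t \blue{\Leq} s) \in \H$ or $(t \blue{\Leq} s) \in \H$; in the first case a single \emph{forward} application of $\blue{\vec{\nabla}_{=}^{s}}$ with $s \blue{=} t$ (from part~\ref{lemma:additional2}) rewrites the occurrence of $s$ in $\blue{\neg}(t \blue{\Leq} s)$ to give $\blue{\neg}(t \blue{\Leq} t) \in \H$, contradicting part~\ref{lemma:additional4}. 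Your strategy could be repaired differently by first noting that $s \blue{\Leq} t \in \H$ forces $\H$ to be saturated (Lemma~\ref{lemma:sufficient}\ref{lemma:sufficient6}, whose proof does not depend on the present lemma) and then deriving $t \blue{=} s \in \H$ as in Lemma~\ref{lemma:saturated}\ref{lemma:saturated3}, but as written the backward rewrite is unjustified.
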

\begin{proof}
Let $\H \in \mathfrak{H}$ be an acceptable Hintikka set.
\begin{enumerate}[(a)]
\item Let $(s \blue{\Leq} t) \in \H$. By definition of $\blue{\Leq}$
  and $\blue{\vec{\nabla}_{\beta\eta}}$ we have
  $\big(\lambda P.\, (P\;s) \blue{\Rightarrow} (P\;t)\big) \blue{=}
  \big(\lambda P.\,\top\big) \in \H$,
  and hence, by $\blue{\vec{\nabla}_{\mathfrak{f}}^+}$ and
  $\blue{\vec{\nabla}_{\beta\eta}}$, it holds that
  $\big((w\;s) \blue{\Rightarrow} (w\;t)\big) \blue{=} \top \in \H$
  for every closed term $w$.  Then,
  $(w\;s) \blue{\Rightarrow} (w\;t) \in \H$ by
  Lemma~\ref{lemma:basic}\ref{lemma:basic4}.  By definition of
  $\blue{\Rightarrow}$ and $\blue{\vec{\nabla}_{\beta\eta}}$ it holds
  that $\neg(w\;s) \blue{\lor} (w\;t) \in \H$ and hence, by
  Lemma~\ref{lemma:usualconnectives}\ref{lemma:usualconnectives2}, that
  $\neg(w\;s) \in \H$ or $(w\;t) \in \H$.  For
  $w \equiv (\lambda X.\, s \blue{=} X)$ it follows by
  $\blue{\vec{\nabla}_{\beta\eta}}$ that $\neg(s \blue{=} s) \in \H$
  or $(s \blue{=} t) \in \H$.  Since the former case contradicts
  $\blue{\vec{\nabla}_{=}^r}$, it follows that
  $(s \blue{=} t) \in \H$.
  
  \item Let $\blue{\neg}(s \blue{\Leq} t) \in \H$. By definition of $\blue{\Leq}$
  and $\blue{\vec{\nabla}_{\beta\eta}}$ we have
   $\big(\lambda P.\, (P\;s) \blue{\Rightarrow} (P\;t)\big) \blue{\neq} \big(\lambda P.\,\top\big) \in \H$.
  Hence, by $\blue{\vec{\nabla}_{\mathfrak{f}}^-}$ and
  $\blue{\vec{\nabla}_{\beta\eta}}$, 
  it holds that $\big((p\;s) \blue{\Rightarrow} (p\;t)\big) \blue{\neq} \top \in \H$ for some
  parameter $p$.
  By Lemma~\ref{lemma:basic}\ref{lemma:basic5} it follows that
  $\blue{\neg}\big((p\;s) \blue{\Rightarrow} (p\;t)\big) \in
  \H$. 
  Then, by Lemma~\ref{lemma:usualconnectives}\ref{lemma:usualconnectives1} and \ref{lemma:usualconnectives}\ref{lemma:usualconnectives3}, it follows that
  $\blue{\neg}\blue{\neg}(p\;s) \in \H$ and $\blue{\neg}(p\;t) \in \H$. Moreover, $(p\;s) \in \H$ by
  Lemma~\ref{lemma:usualconnectives}\ref{lemma:usualconnectives1}.
  By $\blue{\vec{\nabla}_{{m}}}$ it then follows that $(p\;s)
  \blue{\neq} (p\;t) \in \H$, and finally, by
  $\blue{\vec{\nabla}_{{d}}}$, 
  that $s \blue{\neq} t \in \H$.
  
  \item Assume $\blue{\neg}(s \blue{\Leq} s) \in \H$.
        By~\ref{lemma:additional3} above it follows that $s \blue{\neq} s \in \H$
        which contradicts $\blue{\vec{\nabla}_{=}^r}$.
        Hence, $\blue{\neg}(s \blue{\Leq} s) \notin \H$.
  
  \item Let $u[s]_p \in \H$ and $s \blue{\Leq} t \in \H$. By~\ref{lemma:additional2} above it holds that 
        $s \blue{=} t \in \H$ and thus by $\blue{\vec{\nabla}_=^s}$ it follows
        that $u[t]_p \in \H$.
  
  \item Let $(s \blue{\Leq} t) \in \H$. By definition of $\blue{\Leq}$
  and $\blue{\vec{\nabla}_{\beta\eta}}$ 
  we have $\big(\lambda P.\, (P\;s) \blue{\Rightarrow} (P\;t)\big) \blue{=} \big(\lambda P.\,\top\big) \in \H$.
  Hence, by $\blue{\vec{\nabla}_{\mathfrak{f}}^+}$ and $\blue{\vec{\nabla}_{\beta\eta}}$,
  it holds that $\big((w\;s) \blue{\Rightarrow} (w\;t)\big) \blue{=} \top \in \H$ for every
  closed term $w$. Then, $(w\;s) \blue{\Rightarrow} (w\;t) \in \H$ by Lemma~\ref{lemma:basic}\ref{lemma:basic4}.
  By definition of $\blue{\Rightarrow}$ and $\blue{\vec{\nabla}_{\beta\eta}}$ it holds that
  $\neg(w\;s) \blue{\lor} (w\;t) \in \H$ and hence by Lemma~\ref{lemma:usualconnectives}\ref{lemma:usualconnectives2} 
  that $\neg(w\;s) \in \H$ or $(w\;t) \in \H$.
  For $w \equiv (\lambda X.\, t \blue{\Leq} s)$, it follows by $\blue{\vec{\nabla}_{\beta\eta}}$
  that $\neg(t \blue{\Leq} s) \in \H$ or $(t \blue{\Leq} s) \in \H$. Assume
  $\neg(t \blue{\Leq} s) \in \H$. Since by Lemma~\ref{lemma:additional}\ref{lemma:additional2}
  it holds that $s \blue{=} t \in \H$, it follows by $\blue{\vec{\nabla}_=^s}$
  that $\neg(t \blue{\Leq} t) \in \H$, which
  contradicts~\ref{lemma:additional4} above. Hence, $(t \blue{\Leq} s) \in \H$.
  
  \item Let $(s \blue{\Leq} t) \in \H$ and $(t \blue{\Leq} u) \in \H$.
        By~\ref{lemma:additional5} above it follows that 
        $(s \blue{\Leq} u) \in \H$.
\end{enumerate}
\end{proof}

\begin{lemma}[Sufficient conditions for saturatedness]\label{lemma:sufficient}
Let $\H \in \mathfrak{H}$. Then it holds that
\begin{enumerate}[(a)]
  \item \label{lemma:sufficient1} If $\top \in \H$, then $\H$ is saturated.
  \item \label{lemma:sufficient2} If $\blue{\neg} s \in \H$ for some closed term $s$, then $\H$ is saturated.
  \item \label{lemma:sufficient3} If $s \blue{\lor} t \in \H$ for some closed terms $s,t$, then $\H$ is saturated.
  \item \label{lemma:sufficient4} If $s \blue{\land} t \in \H$ for some closed terms $s,t$, then $\H$ is saturated.
  \item \label{lemma:sufficient5} If $\blue{\Pi}^\tau\;P \in \H$ for some closed term $P_{\tau\to o}$, then $\H$ is saturated.
  \item \label{lemma:sufficient6} If $s \blue{\Leq} t \in \H$ for some closed terms $s,t$, then $\H$ is saturated.
\end{enumerate}
\end{lemma}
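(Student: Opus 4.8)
The plan is to make part~\ref{lemma:sufficient1} the core of the lemma and to reduce every other case to it by showing that the respective hypothesis already forces $\top \in \H$; the only case I would route differently is part~\ref{lemma:sufficient3}, which I would fold back into part~\ref{lemma:sufficient2}. This way the actual work is concentrated entirely in part~\ref{lemma:sufficient1}, while the five remaining parts become short unfoldings of the defined connectives combined with the basic facts already collected in Lemma~\ref{lemma:basic}.

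For part~\ref{lemma:sufficient1} I would fix an arbitrary closed formula $r$ and aim to show $r \in \H$ or $\blue{\neg} r \in \H$. Recall that $\top$ abbreviates $\blue{=}^o \blue{=}^{ooo} \blue{=}^o$, i.e.\ the reflexivity instance of $\blue{=}^{ooo}$ whose two arguments are both $\blue{=}^o$ (a function of type $ooo$). Starting from $\top \in \H$ and applying $\blue{\vec{\nabla}_{\mathfrak{f}}^+}$ twice, feeding first the closed argument $r$ and then the closed argument $\top$ (both of type $o$), together with $\blue{\vec{\nabla}_{\beta\eta}}$, I obtain the type-$o$ equation $(r \blue{=} \top) \blue{=} (r \blue{=} \top) \in \H$; this is exactly the manoeuvre already used inside the proof of Lemma~\ref{lemma:basic}\ref{lemma:basic7}. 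Boolean extensionality $\blue{\vec{\nabla}_{\mathfrak{b}}^+}$, applied to this equation, then forces a decision: either $(r \blue{=} \top) \in \H$ or $\blue{\neg}(r \blue{=} \top) \in \H$, i.e.\ $(r \blue{\neq} \top) \in \H$. In the first case Lemma~\ref{lemma:basic}\ref{lemma:basic4} yields $\{r,\top\} \subseteq \H$, hence $r \in \H$; in the second case Lemma~\ref{lemma:basic}\ref{lemma:basic5} yields $\{\blue{\neg} r, \top\} \subseteq \H$, hence $\blue{\neg} r \in \H$. As $r$ was an arbitrary closed formula, $\H$ is saturated.

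For the remaining parts I would simply exhibit $\top \in \H$ and invoke part~\ref{lemma:sufficient1}. For part~\ref{lemma:sufficient2}, $\blue{\neg} s$ is $\beta\eta$-equal to $(s \blue{=} \bot)$, so $\blue{\vec{\nabla}_{\beta\eta}}$ gives $(s \blue{=} \bot) \in \H$, Lemma~\ref{lemma:basic}\ref{lemma:basic5} gives $\blue{\neg}\bot \in \H$, and Lemma~\ref{lemma:basic}\ref{lemma:basic6} gives $\top \in \H$. Part~\ref{lemma:sufficient3} I would handle without touching $\top$: since $s \blue{\lor} t$ is $\beta\eta$-equal to $\blue{\neg}(\blue{\neg} s \blue{\land} \blue{\neg} t)$, which is the negation of a closed formula, part~\ref{lemma:sufficient2} applies directly. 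For part~\ref{lemma:sufficient4}, unfolding $\blue{\land}$ and applying $\blue{\vec{\nabla}_{\mathfrak{f}}^+}$ with $w \equiv \lambda x.\,\lambda y.\, x$ (as in the proof of Lemma~\ref{lemma:usualconnectives}\ref{lemma:usualconnectives3}) yields $\top \blue{=} s \in \H$, whence $\top \in \H$ by Lemma~\ref{lemma:basic}\ref{lemma:basic4}. For part~\ref{lemma:sufficient5}, $\blue{\Pi}^\tau P$ is $\beta\eta$-equal to $(P \blue{=}^{o\tau} \lambda X.\,\top)$; feeding a closed argument $w$ of type $\tau$ through $\blue{\vec{\nabla}_{\mathfrak{f}}^+}$ gives $(P\,w) \blue{=} \top \in \H$, and Lemma~\ref{lemma:basic}\ref{lemma:basic4} again produces $\top \in \H$. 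Finally, for part~\ref{lemma:sufficient6}, unfolding $\blue{\Leq}$ and applying $\blue{\vec{\nabla}_{\mathfrak{f}}^+}$ with the closed argument $w \equiv \lambda X.\,\top$ of type $o\tau$ (exactly as in the proof of Lemma~\ref{lemma:additional}\ref{lemma:additional2}) yields $\big((w\,s) \blue{\Rightarrow} (w\,t)\big) \blue{=} \top \in \H$, and Lemma~\ref{lemma:basic}\ref{lemma:basic4} delivers $\top \in \H$.

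The main obstacle is part~\ref{lemma:sufficient1}: the non-obvious move is to manufacture, for an arbitrary closed $r$, the reflexive type-$o$ equation $(r \blue{=} \top) \blue{=} (r \blue{=} \top)$ inside $\H$ purely from $\top \in \H$, so that boolean extensionality can decide $r \blue{=} \top$ and, through Lemma~\ref{lemma:basic}\ref{lemma:basic4}/\ref{lemma:basic5}, decide $r$ itself. Once this is in place the reductions are routine. The only side condition worth flagging is in part~\ref{lemma:sufficient5}, which needs a closed term of type $\tau$ to instantiate $\blue{\vec{\nabla}_{\mathfrak{f}}^+}$; this is unproblematic, since the signature is assumed to contain parameters of every type (as is already presupposed by the witness-producing properties $\blue{\vec{\nabla}_{\mathfrak{f}}^-}$ and $\blue{\vec{\nabla}_{\mathfrak{b}}^-}$), so such closed terms always exist.
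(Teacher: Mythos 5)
Your proposal is correct and follows essentially the same route as the paper: part~(a) is established by applying $\blue{\vec{\nabla}_{\mathfrak{f}}^+}$ twice to $\top \equiv {\beq^o}\,{\beq^{ooo}}\,{\beq^o}$ to manufacture a reflexive type-$o$ equation and then invoking $\blue{\vec{\nabla}_{\mathfrak{b}}^+}$, and every other part is reduced to (a) by exhibiting $\top \in \H$ (with (c) routed through (b)), exactly as in the paper. The only deviation is cosmetic: in (a) you instantiate the equation at $(r,\top)$ and split via Lemma~\ref{lemma:basic}\ref{lemma:basic4}/\ref{lemma:basic5}, whereas the paper instantiates at $(c,c)$ and uses $\blue{\vec{\nabla}_{=}^{r}}$ to discharge the negative branch; both are sound.
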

\begin{proof}
Let $\H \in \mathfrak{H}$ be an acceptable Hintikka set.
\begin{enumerate}[(a)] 
  \item Let $\top \in \H$, that is, $\blue{=}^o\blue{=}^{ooo}\blue{=}^o \in \H$.
        By $\blue{\vec{\nabla}_{\mathfrak{f}}^+}$ and $\blue{\vec{\nabla}_{\beta\eta}}$
        it follows that $(s_o \blue{=} t_o)\blue{=}(s_o\blue{=}t_o) \in \H$ for every
        two closed formulas $s,t$. For $s \equiv t \equiv c$ for some closed term $c$,
        it follows that 
        $(c \blue{=} c)\blue{=}(c\blue{=}c) \in \H$
        and thus, by $\blue{\vec{\nabla}_{\mathfrak{b}}^+}$ and $\blue{\vec{\nabla}_{=}^{r}}$,
        it holds that $c \blue{=} c \in H$. By $\blue{\vec{\nabla}_{\mathfrak{b}}^+}$ 
        it follows that $c \in \H$ or $\blue{\neg}c \in \H$.
        Hence, $\H$ is saturated.
       
  \item If $\neg s \in \H$ for some closed term $s$, then $s \blue{=} \bot \in \H$.
        By $\blue{\vec{\nabla}_{\mathfrak{b}}^+}$, it follows that
        either $\{s,\bot\} \subseteq \H$ or $\{\blue{\neg}s,\blue{\neg}\bot\} \subseteq \H$.
        Since the former case is ruled out by Lemma~\ref{lemma:basic}\ref{lemma:basic1}, 
        it follows that $\blue{\neg}\bot \in \H$.
        By Lemma~\ref{lemma:basic}\ref{lemma:basic6}  it follows that $\top \in \H$ and by \ref{lemma:sufficient1} above
        it follows that $\H$ is saturated.
        
  \item If $s \blue{\lor} t \in \H$ for some closed terms $s,t$, then
    by definition of $\blue{\lor}$ and
    $\blue{\vec{\nabla}_{\beta\eta}}$ we have 
        $\blue{\neg}(\blue{\neg}s \blue{\land} \blue{\neg}t) \in \H$. An application of
        \ref{lemma:sufficient2} yields the desired result.
        
  \item If $s \blue{\land} t \in \H$ for some closed terms $s,t$, then
    by definition of $\blue{\land}$ and
    $\blue{\vec{\nabla}_{\beta\eta}}$ it holds 
        $(\lambda g.\, g\;s\;t) \blue{=} (\lambda g.\, g\;\top\;\top) \in \H$.
        By $\blue{\vec{\nabla}_{\mathfrak{f}}^+}$ and $\blue{\vec{\nabla}_{\beta\eta}}$
        it follows that $s \blue{=} \top \in \H$ (take $\lambda x.\, \lambda y.\, x$).
        By $\blue{\vec{\nabla}_{\mathfrak{b}}^+}$, it follows that
        either $\{s,\top\} \subseteq \H$ or $\{\blue{\neg}s,\blue{\neg}\top\} \subseteq \H$.
        Since the latter case is ruled out by Lemma~\ref{lemma:basic}\ref{lemma:basic2}, 
        it follows that $\top \in \H$.
        An application of \ref{lemma:sufficient1} above yields the desired result.
        
  \item If $\blue{\Pi}^\tau\;s \in \H$ for some closed terms $s$, then
    by definition
    of $\blue{\Pi}^\tau$ and
    $\blue{\vec{\nabla}_{\beta\eta}}$ it holds that
        $s \blue{=} (\lambda x.\, \top) \in \H$.
        By $\blue{\vec{\nabla}_{\mathfrak{f}}^+}$ and $\blue{\vec{\nabla}_{\beta\eta}}$
        it follows that $(s\;w) \blue{=} \top \in \H$ for every closed term $w$.
        By $\blue{\vec{\nabla}_{\mathfrak{b}}^+}$, it follows that
        either $\{(s\;w),\top\} \subseteq \H$ or $\{\blue{\neg}(s\;w),\blue{\neg}\top\} \subseteq \H$.
        Since the latter case is ruled out by Lemma~\ref{lemma:basic}\ref{lemma:basic2}, 
        it follows that $\top \in \H$.
        An application of \ref{lemma:sufficient1} above yields the desired result.

  \item Let $s \blue{\Leq} t \in \H$. By definition
    of $\blue{\Pi}^\tau$ and
    $\blue{\vec{\nabla}_{\beta\eta}}$ it holds that 
        $\blue{\Pi}\big(\lambda P.\, (P\;s) \blue{\Rightarrow} (P\;t)\big) \in \H$.
        An application of \ref{lemma:sufficient5} yields the desired result.

\end{enumerate}
\end{proof}

\begin{corollary}\label{corollary:equalitysaturated}
Let $\H \in \mathfrak{H}$ and let $s \blue{\neq} t \in \H$ or $\blue{\neg}(s \blue{\Leq} t) \in \H$
for some closed terms $s,t$.
Then, $\H$ is saturated.
\end{corollary}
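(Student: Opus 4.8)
The plan is to reduce both disjuncts of the hypothesis to the already-established sufficient condition Lemma~\ref{lemma:sufficient}\ref{lemma:sufficient2}, which guarantees saturatedness as soon as $\H$ contains a negated closed formula $\blue{\neg} s$. I would therefore argue by cases on the disjunction in the assumption.

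In the first case I assume $s \blue{\neq} t \in \H$. By the notational convention fixed earlier, this is merely an abbreviation for $\blue{\neg}(s \blue{=} t) \in \H$. Since $s$ and $t$ are closed, the equation $(s \blue{=} t)$ is a closed formula of type $o$, so $\blue{\neg}(s \blue{=} t)$ is a negated closed formula in $\H$. Applying Lemma~\ref{lemma:sufficient}\ref{lemma:sufficient2} with $(s \blue{=} t)$ playing the role of its closed term immediately gives that $\H$ is saturated.

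In the second case I assume $\blue{\neg}(s \blue{\Leq} t) \in \H$. Here I would first apply Lemma~\ref{lemma:additional}\ref{lemma:additional3}, which turns a negated Leibniz equation into a negated primitive equation and hence yields $s \blue{\neq} t \in \H$. This reduces the second case to the first, and the conclusion follows as above.

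I do not expect any genuine obstacle here: the corollary is essentially a repackaging of Lemma~\ref{lemma:additional}\ref{lemma:additional3} together with Lemma~\ref{lemma:sufficient}\ref{lemma:sufficient2}. The only points needing attention are purely definitional, namely unfolding $s \blue{\neq} t$ to $\blue{\neg}(s \blue{=} t)$ and observing that closedness of $s,t$ makes $(s \blue{=} t)$ a closed formula, so that Lemma~\ref{lemma:sufficient}\ref{lemma:sufficient2} is applicable.
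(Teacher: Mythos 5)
Your proposal is correct, and the first case coincides exactly with the paper's argument. The only divergence is in the second case: you route $\bneg(s \bleq t) \in \H$ through Lemma~\ref{lemma:additional}\ref{lemma:additional3} to obtain $s \bneq t \in \H$ and then fall back on the first case, whereas the paper observes that $\bneg(s \bleq t)$ is \emph{already} a negated closed formula (the Leibniz equation $s \bleq t$ being a closed formula of type $o$ when $s,t$ are closed), so Lemma~\ref{lemma:sufficient}\ref{lemma:sufficient2} applies to it directly, making both disjuncts immediate instances of the same lemma. Your detour is sound and not circular (Lemma~\ref{lemma:additional}\ref{lemma:additional3} is established independently of this corollary), but it invokes machinery --- $\blue{\vec{\nabla}_{\mathfrak{f}}^-}$, $\blue{\vec{\nabla}_{m}}$, $\blue{\vec{\nabla}_{d}}$ buried inside that lemma --- that the statement does not actually need; the paper's one-line observation is the more economical reduction.
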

\begin{proof}
As $(s \blue{\neq} t) \equiv \blue{\neg}(s \blue{=} t)$, both cases are a special instance of 
Lemma~\ref{lemma:sufficient}\ref{lemma:sufficient2}.
\end{proof}

\begin{lemma}[Saturated sets properties]\label{lemma:saturated}
Let $\H \in \mathfrak{H}$ and let $\H$ be saturated. Then it holds that
\begin{enumerate}[(a)]  
  \item \label{lemma:saturated1} If $s \blue{=} t \in \H$, then $s \blue{\Leq} t \in \H$.
%  \item \label{lemma:saturated2} If $s \blue{\neq} t \in \H$, then $\blue{\neg}(s \blue{\Leq} t) \in \H$.
  \item \label{lemma:saturated3} If $s \blue{=} t \in \H$ then $t \blue{=} s \in \H$.
  \item \label{lemma:saturated4} $s \blue{=} s \in \H$ for every closed term $s$.
  \item \label{lemma:saturated5} $s \blue{\Leq} s \in \H$ for every closed term $s$.
\end{enumerate}
\end{lemma}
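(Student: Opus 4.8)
The four statements share a single strategy, so the plan is to exploit \emph{saturatedness} as a mechanism for turning each membership question into a two-way case split and then to discard the unwanted alternative by one of the already-established (anti-)reflexivity facts. I would first dispose of the two reflexivity claims~(c) and~(d), as they are self-contained, and then derive~(a) and~(b) from them together with the substitution property $\blue{\vec{\nabla}_{=}^{s}}$. Throughout, note that since $\H$ consists of sentences, any hypothesis of the form $s\blue{=}t\in\H$ forces $s,t$ to be closed, so the formulas to which saturatedness is applied are genuinely closed formulas of type $o$ (recall both $\blue{=}$ and $\blue{\Leq}$ return Booleans).

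For~(c), the term $s\blue{=}s$ is a closed formula whenever $s$ is closed, so saturatedness yields $s\blue{=}s\in\H$ or $\blue{\neg}(s\blue{=}s)\in\H$. The second disjunct is exactly $(s\blue{\neq}s)\in\H$, which is forbidden by $\blue{\vec{\nabla}_{=}^{r}}$; hence $s\blue{=}s\in\H$. Claim~(d) is the verbatim analogue for Leibniz equality: $s\blue{\Leq}s$ is again a closed formula, saturatedness gives $s\blue{\Leq}s\in\H$ or $\blue{\neg}(s\blue{\Leq}s)\in\H$, and the latter is excluded directly by Lemma~\ref{lemma:additional}\ref{lemma:additional4}.

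For~(a), assume $s\blue{=}t\in\H$ and apply saturatedness to the closed formula $s\blue{\Leq}t$: either $s\blue{\Leq}t\in\H$, in which case we are done, or $\blue{\neg}(s\blue{\Leq}t)\in\H$. In the second case Lemma~\ref{lemma:additional}\ref{lemma:additional3} gives $s\blue{\neq}t\in\H$, i.e.\ $\blue{\neg}(s\blue{=}t)\in\H$; together with the hypothesis $s\blue{=}t\in\H$ this contradicts $\blue{\vec{\nabla}_{c}}$, so only the first case survives. (One could instead obtain~(a) from~(d) by rewriting $s\blue{\Leq}s$ with $s\blue{=}t$ via $\blue{\vec{\nabla}_{=}^{s}}$, but because $\blue{\Leq}$ is a defined abbreviation this would require extra $\blue{\vec{\nabla}_{\beta\eta}}$-bookkeeping to isolate the intended occurrence, which the dichotomy argument avoids.)

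For~(b), I would avoid any new case analysis and instead combine reflexivity with substitution, exactly as in the proof of transitivity in Lemma~\ref{lemma:basic}\ref{lemma:basic8}. From~(c) we have $s\blue{=}s\in\H$; regarding this as $u[s]_p\in\H$ with the chosen position $p$ being the \emph{left} argument of $\blue{=}$, the hypothesis $s\blue{=}t\in\H$ lets $\blue{\vec{\nabla}_{=}^{s}}$ rewrite that occurrence, yielding $u[t]_p\equiv(t\blue{=}s)\in\H$. The arguments are short; the only point demanding care is precisely this choice of occurrence in~(b) (rewriting the right argument of $\blue{=}$ would merely reproduce the hypothesis $s\blue{=}t$), together with the routine check that saturatedness legitimately applies to $s\blue{=}s$, $s\blue{\Leq}s$, and $s\blue{\Leq}t$ because these are closed Boolean formulas.
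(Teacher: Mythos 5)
Your proof is correct, and for parts (a), (c) and (d) it follows the paper's argument essentially verbatim: saturation produces the two-way split, and the unwanted branch is discarded by $\blue{\vec{\nabla}_{=}^{r}}$, by Lemma~\ref{lemma:additional}\ref{lemma:additional4}, or (for (a)) by Lemma~\ref{lemma:additional}\ref{lemma:additional3} together with $\blue{\vec{\nabla}_c}$. The only genuine divergence is in (b): the paper argues by contradiction, assuming $t \blue{=} s \notin \H$, invoking saturation to get $t \blue{\neq} s \in \H$, rewriting the occurrence of $s$ via $\blue{\vec{\nabla}_{=}^{s}}$ to obtain $t \blue{\neq} t \in \H$, and contradicting $\blue{\vec{\nabla}_{=}^{r}}$; you instead argue directly, starting from $s \blue{=} s \in \H$ (your part (c)) and rewriting its left occurrence to obtain $t \blue{=} s \in \H$. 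Both routes use exactly the same ingredients (saturation, (ir)reflexivity, and $\blue{\vec{\nabla}_{=}^{s}}$), so the difference is organizational: your version is direct but makes (b) depend on (c), whereas the paper's keeps (b) self-contained. Your identification of the left occurrence as the one to rewrite is the correct and necessary choice, as you note.
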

\begin{proof}
Let $\H \in \mathfrak{H}$ and let $\H$ be saturated.
\begin{enumerate}[(a)]  
  \item Let $s \blue{=} t \in \H$ and assume $s \blue{\Leq} t \notin \H$.
        Since $\H$ is saturated we  have $\blue{\neg}(s \blue{\Leq} t) \in \H$.
        Then, by Lemma~\ref{lemma:additional}\ref{lemma:additional3}, 
        it follows that $s \blue{\neq} t \in \H$, and thus $\{s \blue{=} t, s \blue{\neq} t\} \subseteq \H$,
        which contradicts $\blue{\vec{\nabla}_c}$.
        Hence, $s \blue{\Leq} t \in \H$.
        
%  \item Let $s \blue{\neq} t \in \H$ and assume $\blue{\neg}(s \blue{\Leq} t) \notin \H$.
%        Since $\H$ is saturated we have that $(s \blue{\Leq} t) \in \H$.
%        Then, by Lemma~\ref{lemma:additional}\ref{lemma:additional2}, it follows that $s
%        \blue{=} t \in \H$, and thus $\{s \blue{=} t, s \blue{\neq}
%        t\} \subseteq \H$, which contradicts $\blue{\vec{\nabla}_c}$.
%        Hence, $\blue{\neg}(s \blue{\Leq} t) \in \H$.
%  
  \item Let $s \blue{=} t \in \H$ and assume $t \blue{=} s \notin \H$.
        Since $\H$ is saturated we have $t \blue{\neq} s \in \H$. 
        Then, by $\blue{\vec{\nabla}_=^s}$, it follows that $t \blue{\neq} t \in \H$
        which contradicts $\blue{\vec{\nabla}_=^r}$.
        Hence, $t \blue{=} s \in \H$.
        
  \item Let $s$ be a closed term of some type and assume that $s \blue{=} s \notin \H$.
        Since $\H$ is saturated we have that $s \blue{\neq} s \in \H$. Since this
        contradicts $\blue{\vec{\nabla}_=^r}$ it follows that $s \blue{=} s \in \H$.
        
  \item Let $s$ be a closed term of some type and assume that $s \blue{\Leq} s \notin \H$.
        Since $\H$ is saturated we have that $\blue{\neg}(s \blue{\neq} s) \in \H$. Since this
        is impossible by Lemma~\ref{lemma:additional}\ref{lemma:additional4} it follows that $s \blue{\Leq} s \in \H$.
\end{enumerate}
\end{proof}

\begin{lemma}[Properties of negated equalities]\label{lemma:negatedequality}
Let $\H \in \mathfrak{H}$. It holds that
\begin{enumerate}[(a)]
  \item \label{lemma:negatedequality1} If $s \bneq t \in \H$, then
        $t \bneq s \in \H$.
  \item \label{lemma:negatedequality2} If $\bneg(s \bleq t) \in \H$, then
        $\bneg(t \bleq s) \in \H$
  \item \label{lemma:negatedequality3} If $s \bneq t \in \H$,
        then $\bneg(s \bleq t) \in \H$.
\end{enumerate}
\end{lemma}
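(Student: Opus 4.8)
The plan is to handle all three parts with a single uniform template. In each case the hypothesis is a negated equation (primitive in (a) and (c), Leibniz in (b)) between closed terms, so Corollary~\ref{corollary:equalitysaturated} immediately yields that $\H$ is saturated. I would then argue by contradiction: assume the desired negated equation is \emph{not} in $\H$, use saturatedness to place the corresponding \emph{positive} equation in $\H$, reduce that positive equation to the positive form of the hypothesis via an already-established symmetry or reduction lemma, and finally clash the two with $\blue{\vec{\nabla}_c}$.

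Concretely, for (a) I would suppose $t \bneq s \notin \H$; saturatedness then gives $t \beq s \in \H$, whence $s \beq t \in \H$ by symmetry of primitive equality on saturated sets (Lemma~\ref{lemma:saturated}\ref{lemma:saturated3}). Since $s \bneq t \equiv \bneg(s \beq t) \in \H$ by hypothesis, $\blue{\vec{\nabla}_c}$ is violated, so $t \bneq s \in \H$. For (b) I would suppose $\bneg(t \bleq s) \notin \H$; saturatedness gives $t \bleq s \in \H$, and symmetry of Leibniz equality (Lemma~\ref{lemma:additional}\ref{lemma:additional6}) yields $s \bleq t \in \H$, again contradicting $\bneg(s \bleq t) \in \H$ via $\blue{\vec{\nabla}_c}$. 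For (c) I would suppose $\bneg(s \bleq t) \notin \H$; saturatedness gives $s \bleq t \in \H$, and Lemma~\ref{lemma:additional}\ref{lemma:additional2} reduces this to $s \beq t \in \H$, which clashes with $s \bneq t \equiv \bneg(s \beq t) \in \H$ under $\blue{\vec{\nabla}_c}$.

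The only point requiring care is the applicability of Corollary~\ref{corollary:equalitysaturated}, which is stated for closed terms $s,t$: one must note that the members of $\H$ are sentences, so any equation $s \beq t$ or $s \bleq t$ occurring in $\H$ is closed and hence $s,t$ are closed terms. Once saturatedness is secured, each part is a two-step derivation, so I do not expect a genuine obstacle; the main thing is simply to invoke the correct reduction lemma in each case — Lemma~\ref{lemma:saturated}\ref{lemma:saturated3} for primitive equality, Lemma~\ref{lemma:additional}\ref{lemma:additional6} for Leibniz equality, and Lemma~\ref{lemma:additional}\ref{lemma:additional2} for the passage from Leibniz to primitive equality.
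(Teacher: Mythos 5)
Your proof is correct and follows essentially the same route as the paper's: in each part you invoke Corollary~\ref{corollary:equalitysaturated} to obtain saturation, assume the desired negated equation is absent, flip to the positive equation, reduce it to the positive form of the hypothesis, and contradict $\blue{\vec{\nabla}_c}$. The only cosmetic difference is in part (b), where you cite the symmetry of Leibniz equality (Lemma~\ref{lemma:additional}\ref{lemma:additional6}) whereas the paper cites the substitution property (Lemma~\ref{lemma:additional}\ref{lemma:additional5}); your citation is the more direct one.
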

\begin{proof}
Let $\H \in \mathfrak{H}$.
\begin{enumerate}[(a)]
  \item Let $s \bneq t \in \H$ and assume that $t \bneq s \notin \H$.
        By Corollary~\ref{corollary:equalitysaturated}
        it follows that $\H$ is saturated and hence we have that
        $t \beq s \in \H$.
        By saturation and Lemma~\ref{lemma:saturated}\ref{lemma:saturated3}
        it follows that $s \beq t \in \H$, and thus $\{s \bneq t, s \beq t \} \subseteq \H$, 
        which contradicts $\blue{\vec{\nabla}_c}$.
        Hence, $t \bneq s \in \H$.
  \item Let $\bneg(s \bleq t) \in \H$ and assume $\bneg(t \bleq s) \notin \H$.
        By Corollary~\ref{corollary:equalitysaturated}
        it follows that $\H$ is saturated and hence we have that
        $t \bleq s \in \H$.
        Then, by Lemma~\ref{lemma:additional}\ref{lemma:additional5}, it follows
        that $s \bleq t \in \H$, and thus $\{\bneg(s \bleq t), s \bleq t\} \subseteq \H$,
        which contradicts $\blue{\vec{\nabla}_c}$.
        Hence, $\bneg(t \bleq s) \in \H$.
  \item Let $s \bneq t \in \H$ and assume $\bneg(s \bleq t) \notin \H$.
        By Corollary~\ref{corollary:equalitysaturated}
        it follows that $\H$ is saturated and hence we have that  
        $(s \bleq t) \in \H$.
        Then, by Lemma~\ref{lemma:additional}\ref{lemma:additional2}, it follows
        that $s \beq t \in \H$, and thus $\{s \beq t, s \bneq t\} \subseteq \H$,
        which contradicts $\blue{\vec{\nabla}_c}$.
        Hence, $\bneg(s \bleq t) \in \H$.
\end{enumerate}
\end{proof}

\begin{definition}[Leibniz-free]
Let $S$ be a set of formulas. $S$ is called \emph{Leibniz-free} iff
$s \blue{\Leq} t \notin S$ for any terms $s,t$.
\end{definition}

\begin{corollary}[Impredicativity Gap]\label{corollary:gap}
Let $\H \in \mathfrak{H}$. $\H$ is saturated or Leibniz-free.
\end{corollary}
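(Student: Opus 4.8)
The plan is to prove the disjunction directly by a case split on whether $\H$ contains \emph{any} Leibniz equation. If $s \bleq t \notin \H$ for all terms $s,t$, then $\H$ is Leibniz-free by definition and the right disjunct of the claim holds outright; nothing further is needed in this branch. So the real content lies entirely in the complementary case, where some Leibniz equation is present, and there I would aim to show that $\H$ must be saturated.

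Concretely, in that second case there exist terms $s,t$ with $s \bleq t \in \H$, and the whole point is that a single such occurrence already forces saturation. This is precisely the statement of Lemma~\ref{lemma:sufficient}\ref{lemma:sufficient6}, so the essential step reduces to invoking that lemma on the witnessing equation $s \bleq t$ to obtain that $\H$ is saturated, i.e.\ the left disjunct. Thus, at the level of reasoning, the corollary is an almost immediate repackaging of Lemma~\ref{lemma:sufficient}\ref{lemma:sufficient6} into the dichotomy ``no Leibniz equation present (Leibniz-free), or one present (hence saturated).''

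The one point I expect to require care, and hence the main obstacle, is a quantifier mismatch: Lemma~\ref{lemma:sufficient}\ref{lemma:sufficient6} is stated for \emph{closed} terms $s,t$, whereas Leibniz-freeness quantifies over arbitrary terms. I would resolve this by recalling that the elements of $\H$ are sentences. Since $s \bleq t$ unfolds, via $\blue{\vec{\nabla}_{\beta\eta}}$, to $\blue{\Pi}\big(\lambda P.\, (P\,s) \blue{\Rightarrow} (P\,t)\big)$, whose free variables are exactly those of $s$ together with those of $t$ (the variable $P$ being bound), membership $s \bleq t \in \H$ forces $s \bleq t$ to be closed and therefore both $s$ and $t$ to be closed. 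Consequently Lemma~\ref{lemma:sufficient}\ref{lemma:sufficient6} does apply to the witness, closing the second case and completing the case analysis.
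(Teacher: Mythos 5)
Your proof is correct and follows essentially the same route as the paper: assume $\H$ is not Leibniz-free, take a witnessing $s \bleq t \in \H$, and apply Lemma~\ref{lemma:sufficient}\ref{lemma:sufficient6} to conclude saturation. Your extra remark resolving the closed-versus-arbitrary-term mismatch is a welcome refinement that the paper's own one-line proof leaves implicit.
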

\begin{proof}
Assume that $\H$ is not Leibniz-free. Then, there exists some formula
$s \blue{\Leq} t \in \H$. An application of Lemma~\ref{lemma:sufficient}\ref{lemma:sufficient6}
yields the desired result.
\end{proof}

\paragraph{Summary of properties of equality and Leibniz-equality.}
The following table contains an overview of the implied properties of $\blue{=}$ and $\blue{\Leq}$,
respectively. A property that holds unconditionally is marked with $\checkmark$, 
a property that holds for saturated Hintikka sets is marked with $\mathsf{sat.}$.\\

\begin{center}
\begin{tabular}{l||c|c}
\textbf{Property} & $\star \equiv \blue{=}$ & $\star \equiv \blue{\Leq}$ \\
\hline
$s \star s \in \H$ & $\mathsf{sat.}$ & $\mathsf{sat.}$ \\
$\blue{\neg}(s \star s) \notin \H$ & $\checkmark$ & $\checkmark$ \\
If $s \star t \in \H$ and $t \star u \in \H$, then $s \star u \in \H$ & $\checkmark$ & $\checkmark$ \\
If $s \star t \in \H$, then $t \star s \in \H$ & $\mathsf{sat.}$ & $\checkmark$ \\
If $u[s]_p \in \H$ and $s \star t \in \H$, then $u[t]_p \in \H$ & $\checkmark$ & $\checkmark$
\end{tabular}
\end{center}

%%%%%%%%%%%%%%%%%%%%%%%%%%%%%%%%%%%%%%%%%%%%%%%%%%%%%%%%%%%%%%%%%%%%%%%%%%%%%%%%%%%%%%%%
%%%%%%%%%%%%%%%%%%%%%%%%%%%%%%%%%%%%%%%%%%%%%%%%%%%%%%%%%%%%%%%%%%%%%%%%%%%%%%%%%%%%%%%%
%%%%%%%%%%%%%%%%%%%%%%%%%%%%%%%%%%%%%%%%%%%%%%%%%%%%%%%%%%%%%%%%%%%%%%%%%%%%%%%%%%%%%%%%
\section{Reduction of $\mathfrak{H}$ (Steen) to
  $\mathfrak{Hint}_{\beta\mathfrak{f}\mathfrak{b}}$ (Brown)} \label{sec:reduction}
\newcommand{\HH}{\H_{\red{\triangleright}}}
\newcommand{\HHH}{\HH^\sharp}

We reduce the notion of Hintikka sets of Steen to the notion of Hintikka
sets by Brown. 
Conceptually, every formula from a set $\H \in \mathfrak{H}$  is first translated 
to its red equivalent under a signature $\red{\Sigma} \subseteq \{ \rneg \} \cup \{ \req^\tau \mid \tau \in \types \}$
containing no further primitive logical connectives.
Note that this involves mapping a primitive connective to a primitive connective (i.e.\ $\beq$ to $\req$)
as well as mapping a defined connective to a primitive connective (i.e.\ $\bneg$ to $\rneg$).
In a second step, the red formulas are translated into their meta-counterparts, i.e.\ into
external propositions.

Formally, we define the mapping as follows: Let $\red{\Sigma}$ be
given as
introduced above. We define $\HH :=
\{(s[\bneg\backslash\rneg])[\beq\backslash\req]| s\in
\mathfrak{H}\}$, where $s[\blue{l}\backslash \red{r}]$ denotes that
term in $\Lambda^c(\red{\Sigma})$ that is obtained by replacing all
occurrences of  $\blue{l}$ in $s$ by $\red{r}$. Obviously, if
$s\in\Lambda^c_o(\blue{\Sigma})$ then
$(s[\bneg\backslash\rneg])[\beq\backslash\req]\in\Lambda^c_o(\red{\Sigma})$,
and thus $\HH\subseteq
\Lambda^c_o(\red{\Sigma})$.

% To that end, a translation scheme is employed as follows:
% Let $\HH \subseteq \Lambda^c(\red{\Sigma})$ be the set that is constructed from a set $\H \in \mathfrak{H}$ of closed formulas
% by replacing all occurrences of \blue{blue} connectives (SAY EXACTLY WHICH CONNECTIVES THESE ARE?) by their corresponding \red{red} connective. 
% The (recursively) translated formula associated with $s \in \H$ is written $\red{s}$ (in red).

As the Hintikka properties of Brown are defined in terms of external propositions, we proceed
by constructing a set $\HHH \subseteq prop(\red{\Sigma})$ from $\HH$ by enriching $\HH$ with
its external counterparts:
To that end, for any term $s_o \in \Lambda_o(\red{\Sigma})$, we denote by $s^\sharp \in prop(\red{\Sigma})$
the term that is constructed by replacing the primitive connective at head position by its external equivalent\footnote{
This corresponds to the original definition of $s^\sharp$ by Brown~\cite[Def. 2.1.38]{brown07}.
}, i.e.\
we have that $(\rneg s)^\sharp$  is equal to  $\rdneg s$ and that $(s
\req^o t)^\sharp$ is equal
 to $s \rdeq^o t$. If there is no connective at head position, the term is left unchanged, i.e.,
$(p\; \overline{s^n})^\sharp$ equals $p \; \overline{s^n}$ and $(X\; \overline{s^n})^\sharp$
equals $X \; \overline{s^n}$ whenever $p \in \red{\Sigma}$ is a parameter and $X$ is a variable.

Then, $\HHH$ is defined inductively as follows. For $\H \in \mathfrak{H}$, let $\HHH$ be the smallest
set of external propositions over $\red{\Sigma}$ such that
\begin{enumerate}[(1)]
  \item if $s_o \in \H$, then $\red{s} \in \HHH$, and
  \item if $s_o \in \H$, then $\red{s}^\sharp \in \HHH$, and
  \item if $\rdneg \red{s} \in \HHH$, then $\rdneg \red{s}^\sharp \in \HHH$.
\end{enumerate}

Intuitively, this translation adopts the translated object-level terms of $\H$ (clause (1)),
and additionally augments the set corresponding meta-level terms by replacing
connectives at head positions by its meta-connective equivalent (clause (2)).
Also, connectives directly under a (meta-)negation are considered (clause (3)).\footnote{
Clauses (2) and (3) intuitively reflect property $\red{\vec{\nabla}^\sharp}$
of Hintikka sets by Brown~\cite[Def. 5.5.1]{brown07}.
}
Deep translations are then implicitly provided by the Hintikka closure properties.

We provide simple examples of unsaturated Hintikka sets $\H$, $\HH$
and $\HHH$:
\begin{itemize}
\item  Let $\H := \{s | s \equiv_{\beta\eta} t \text{ for } t \in \{a
\beq^\iota b, a\beq^\iota  a, b \beq^\iota b, b
\beq^\iota a\}\}$. Then $\HH = \{s | s \equiv_{\beta\eta} t \text{ for } t \in \{a
\req^\iota b, a\req^\iota  a, b \req^\iota b, b
\req^\iota a\}\}$ and $\HHH = \HH \cup \{s | s \equiv_{\beta\eta} t \text{ for } t \in \{a
\rdeq^\iota b, a\rdeq^\iota  a, b \rdeq^\iota b, b
\rdeq^\iota a\}\}.$
\item  Let $\H := \{s | s \equiv_{\beta\eta} t \text{ for } t \in \{a
\beq^\iota p(\bneg), a\beq^\iota  a, p(\bneg) \beq^\iota p(\bneg), p(\bneg)
\beq^\iota a\}\}$. Then $\HH = \{s | s \equiv_{\beta\eta} t \text{ for } t \in \{a
\req^\iota p(\rneg), a\req^\iota  a, p(\rneg) \req^\iota p(\rneg), p(\rneg)
\req^\iota a\}\}$ and $\HHH = \HH \cup \{s | s \equiv_{\beta\eta} t \text{ for } t \in \{a
\rdeq^\iota p(\rneg), a\rdeq^\iota  a, p(\rneg) \rdeq^\iota p(\rneg), p(\rneg)
\rdeq^\iota a\}\}.$
\end{itemize}

%\fbox{.... insert example (TODO) .... We should take an example in the
%  spirit of the reviewers comment. Comment: saturation not an issue anymore.}

We now show that if $\H \in \mathfrak{H}$ then
$\HHH \in \mathfrak{Hint}_{\beta\mathfrak{f}\mathfrak{b}}$
(i.e., $\HHH$ fulfils all $\red{\vec{\nabla}}$ from \S\ref{sec:chad}).

\begin{theorem}[Reduction of {$\mathfrak{H}$ to $\mathfrak{Hint}_{\beta\mathfrak{f}\mathfrak{b}}$}]
If $\H \in \mathfrak{H}$, then there exists an extensional Hintikka set
$\H^\prime \in \mathfrak{Hint}_{\beta\mathfrak{f}\mathfrak{b}}$
such that $\HHH \equiv \mathcal{H}^\prime$.  \label{thm:reduction}
\end{theorem}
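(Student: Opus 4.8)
The plan is to take $\H' := \HHH$, so that the existential assertion is witnessed by $\HHH$ itself, and to check directly that $\HHH$ satisfies every property $\red{\vec{\nabla}}$ of \S\ref{sec:chad}. The organising device is an explicit normal form for the members of $\HHH$. Writing $\red{s}$ for the object-level renaming $(s[\bneg\backslash\rneg])[\beq\backslash\req]$ (a bijection between blue and red object-level formulas, whose inverse I denote $(\cdot)^\flat$), the closure clauses (1)--(3) are seen to stabilise after a single application of (3), giving
\[
\HHH \;=\; \{\,\red{s} \mid s \in \H\,\} \,\cup\, \{\,\red{s}^\sharp \mid s \in \H\,\} \,\cup\, \{\,\rdneg\,\red{s'}^\sharp \mid \bneg s' \in \H\,\}.
\]
I would first establish this identity, noting that clause (3) yields something genuinely new only when $s'$ is non-atomic, in which case $\red{s'}^\sharp$ already carries a meta-connective at its head and so does not retrigger (3). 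From this normal form the proof proceeds property by property: each membership hypothesis is pulled back along $(\cdot)^\flat$ to a membership in $\H$, the matching property or derived lemma for $\H$ is applied, and the conclusion is pushed forward again through $\red{(\cdot)}$ and $(\cdot)^\sharp$ (clause (1) for an object-level conclusion, clauses (2) and (3) for a meta-level one).

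A first batch of cases is structural. Since Steen's signature carries only $\beq$ and the derived $\bneg$, translated terms use solely $\rneg$ and $\req$, so $(\cdot)^\sharp$ can introduce only $\rdneg$ and $\rdeq^\alpha$; in particular the meta-constants $\rdtop$, $\rdlor$ and $\rdforall$ never occur in $\HHH$, whence $\red{\vec{\nabla}_\bot}$, $\red{\vec{\nabla}_\lor}$, $\red{\vec{\nabla}_\land}$, $\red{\vec{\nabla}_\forall}$ and $\red{\vec{\nabla}_\exists}$ hold vacuously. Property $\red{\vec{\nabla}^\sharp}$ is exactly the content of clauses (2)--(3), using the idempotence $(w^\sharp)^\sharp = w^\sharp$. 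The remaining cases match Steen's machinery almost one-to-one: $\red{\vec{\nabla}_{\beta\eta}}$ combines $\blue{\vec{\nabla}_{\beta\eta}}$ with the fact that $\red{(\cdot)}$ and $(\cdot)^\sharp$ commute with $\beta\eta$-normalisation; $\red{\vec{\nabla}_\neg}$ is Lemma~\ref{lemma:usualconnectives}\ref{lemma:usualconnectives1} (a double meta-negation in $\HHH$ traces back to some $\bneg\bneg(\cdot) \in \H$); $\red{\vec{\nabla}_{=}^r}$ is $\blue{\vec{\nabla}_{=}^r}$; $\red{\vec{\nabla}_{\mathfrak{b}}}$, $\red{\vec{\nabla}_{=}^o}$, $\red{\vec{\nabla}_{\mathfrak{f}}}$ and $\red{\vec{\nabla}_{=}^\rightarrow}$ are the images of $\blue{\vec{\nabla}_{\mathfrak{b}}^-}$, $\blue{\vec{\nabla}_{\mathfrak{b}}^+}$, $\blue{\vec{\nabla}_{\mathfrak{f}}^-}$ and $\blue{\vec{\nabla}_{\mathfrak{f}}^+}$; $\red{\vec{\nabla}_{\mathit{dec}}}$ is the restriction of $\blue{\vec{\nabla}_d}$ to a parameter head of type $\iota$; and $\red{\vec{\nabla}_{\mathit{m}}}$ is obtained by chaining $\blue{\vec{\nabla}_m}$, then $\blue{\vec{\nabla}_d}$, then the symmetry of negated equations (Lemma~\ref{lemma:negatedequality}\ref{lemma:negatedequality1}).

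Two groups of cases need slightly more. For the universally quantified $\red{\vec{\nabla}_{=}^\rightarrow}$ and the parameter-witness property $\red{\vec{\nabla}_{\mathfrak{f}}}$, I would use that $\red\Sigma$ and $\blue\Sigma$ share the same parameters and that $\red{(\cdot)}$ is onto the closed red terms (via $(\cdot)^\flat$), so that Brown's quantification over closed red terms and red parameters corresponds exactly to Steen's over their blue counterparts. For $\red{\vec{\nabla}_{=}^u}$ the hypothesis contains a negated $\iota$-equation; pulling it back gives some $u' \bneq v' \in \H$, so Corollary~\ref{corollary:equalitysaturated} renders $\H$ saturated. The required disjunctions then follow by a standard argument: assuming both disjuncts of one clause fail, saturation converts the two non-memberships into positive equations, whereupon symmetry (Lemma~\ref{lemma:saturated}\ref{lemma:saturated3}) and transitivity (Lemma~\ref{lemma:basic}\ref{lemma:basic8}) derive $u' \beq v' \in \H$, contradicting $\blue{\vec{\nabla}_c}$ (and symmetrically for the second pairing).

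I expect the genuine obstacle to be $\red{\vec{\nabla}_c}$, together with the careful verification of the normal form that feeds it. Ruling out $s, \rdneg s \in \HHH$ simultaneously requires a case analysis over the three shapes above for each of $s$ and $\rdneg s$, and the delicate cases are precisely those in which an object-level member coincides with the un-sharped body of a meta-level member, or in which a coincidence forces equality between the witnesses of two different clauses. In each such case injectivity of $\red{(\cdot)}$ and idempotence of $(\cdot)^\sharp$ collapse the situation to a blue formula and its negation both lying in $\H$ (occasionally after one use of Lemma~\ref{lemma:usualconnectives}\ref{lemma:usualconnectives1}), contradicting $\blue{\vec{\nabla}_c}$; marshalling this bookkeeping without gaps is the crux of the proof.
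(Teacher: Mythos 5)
Your proposal is correct and follows essentially the same route as the paper: a property-by-property verification of Brown's conditions for $\HHH$, discharging $\red{\vec{\nabla}_\bot}$, $\red{\vec{\nabla}_\lor}$, $\red{\vec{\nabla}_\land}$, $\red{\vec{\nabla}_\forall}$ and $\red{\vec{\nabla}_\exists}$ vacuously and mapping each remaining condition to the corresponding Steen property or derived lemma exactly as the paper does. The only substantive deviation is $\red{\vec{\nabla}_{=}^u}$, where you argue by contradiction via saturation, symmetry and transitivity, whereas the paper derives the first disjunction directly by applying $\blue{\vec{\nabla}_m}$ and $\blue{\vec{\nabla}_d}$ to the two pulled-back hypotheses and only invokes saturation for the second pairing; both arguments go through, and your explicit normal form for $\HHH$ merely makes precise the ``by definition'' pull-back steps that the paper leaves implicit in $\red{\vec{\nabla}_c}$ and elsewhere.
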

\begin{proof}
Let $\H \in \mathfrak{H}$ be a Hintikka set according
\S\ref{sec:steen}, i.e., fulfilling all $\blue{\vec{\nabla}}$
properties. 

\noindent We verify every $\red{\vec{\nabla}}$-property for $\HHH$:
\begin{description}
  \item $\red{\vec{\nabla}_c}$: Assume that both $\red{s} \in \HHH$ and $\rdneg \red{s} \in \HHH$.
        Then, by definition, $s \in H$ and $\bneg s \in H$. As this contradicts $\blue{\vec{\nabla}_c}$,
        it follows that $\red{s} \notin \HHH$ or $\rdneg \red{s} \notin \HHH$.

  \item $\red{\vec{\nabla}_{\beta\eta}}$: Let $\red{s} \in \HHH$. Then, by definition, $s \in \H$.
        Since $s \equiv_{\beta\eta} s^\downarrow$ it follows by $\blue{\vec{\nabla}_{\beta\eta}}$
        that $s^\downarrow \in \H$. By definition, we then have $\red{s}^{\downarrow} \in \HHH$.

  \item $\red{\vec{\nabla}_\bot}$: $\rdneg \rdtop \notin \HHH$ is vacuously true, as 
        $\rdtop$ is never the result of any term translation.
        
  \item $\red{\vec{\nabla}_\neg}$: Let $\rdneg\rdneg s \in \HHH$.
        Then, by definition, $\bneg\bneg s \in \H$. By Lemma~\ref{lemma:usualconnectives}\ref{lemma:usualconnectives1}
        it follows that $s \in \H$ and hence $\red{s} \in \HHH$.

  \item $\red{\vec{\nabla}_\lor}$: Vacuously true, as $\rdlor$ is never the result of any term translation.

  \item $\red{\vec{\nabla}_\land}$: Vacuously true, as $\rdlor$ is never the result of any term translation.

  \item $\red{\vec{\nabla}_\forall}$: Vacuously true, as $\rdforall$ is never the result of any term translation.

  \item $\red{\vec{\nabla}_\exists}$: Vacuously true, as $\rdforall$ is never the result of any term translation.

  \item $\red{\vec{\nabla}^\sharp}$: Holds by definition of $\HHH$.

  \item $\red{\vec{\nabla}_{\mathit{m}}}$: Let $\rdneg\big(\red{p}\;\overline{\red{s^n}}\big) \in \HHH$
        and let $\big(\red{p}\;\overline{\red{t^n}}\big) \in \HHH$ for some parameter $p$.
        Then, by definition, $\bneg\big(p\;\overline{s^n}\big) \in \H$
        and $\big({p}\;\overline{{t^n}}\big) \in \H$. By $\blue{\vec{\nabla}_{\mathit{m}}}$
        it follows that $\big(p\;\overline{t^n}\big) \bneq \big(p\;\overline{s^n}\big) \in \H$.
        By $\blue{\vec{\nabla}_{\mathit{d}}}$ it then follows that there is some $i$, $1 \leq i \leq n$, 
        such that $t^i \bneq s^i \in \H$. 
        By Lemma~\ref{lemma:negatedequality}\ref{lemma:negatedequality1} it follows
        that $s^i \bneq t^i \in \H$ and hence $\rdneg(\red{s^i} \rdeq \red{t^i}) \in \HHH$.

  \item $\red{\vec{\nabla}_{\mathit{dec}}}$: Let $p$ be a parameter and
        let $\rdneg\big((\red{p}\;\overline{\red{s^n}}) \rdeq^\iota (\red{p}\;\overline{\red{t^n}})\big) \in \HHH$.
        Then, by definition, $\bneg\big(({p}\;\overline{{s^n}}) \beq^\iota ({p}\;\overline{{t^n}})\big) \in \H$.
        By $\blue{\vec{\nabla}_{\mathit{d}}}$ it follows that there is some $i$, $1 \leq i \leq n$, 
        such that $\bneg ( s^i \beq t^i) \in \H$ and hence $\rdneg ( \red{s^i} \rdeq \red{t^i}) \in \HHH$.

  \item $\red{\vec{\nabla}_{\mathfrak{b}}}$: Let $\rdneg(\red{s} \rdeq^o \red{t}) \in \HHH$.
        Then, by definition, $\bneg({s} \beq^o {t}) \in \H$. By $\blue{\vec{\nabla}_{\mathfrak{b}}^-}$
        it follows that $\{s, \bneg t \} \subseteq \H$ or $\{\bneg s, t \} \subseteq \H$.
        Hence we have that $\{\red{s}, \rdneg \red{t} \} \subseteq \HHH$ or $\{\rdneg \red{s}, \red{t} \} \subseteq \HHH$.

  \item $\red{\vec{\nabla}_{\mathfrak{f}}}$: Let $\rdneg(\red{f} \rdeq^{\nu\tau} \red{g}) \in \HHH$.
        Then, by definition, $\bneg({f} \beq^{\nu\tau} {g}) \in \H$. By $\blue{\vec{\nabla}_{\mathfrak{f}}^-}$
        it follows that $\bneg({f}\;p\beq^{\nu} {g}\;p) \in \H$ for some parameter $p_\tau$.
        Hence we have that $\rdneg(\red{f}\;\red{p}\rdeq^{\nu} \red{g}\;\red{p}) \in \HHH$.

  \item $\red{\vec{\nabla}_{=}^o}$: Let $\red{s} \rdeq^o \red{t} \in \HHH$.
        Then, by definition, $s \beq^o t \in \H$. By $\blue{\vec{\nabla}_{\mathfrak{b}}^+}$
        it follows that $\{s, t \} \subseteq \H$ or $\{\bneg s, \bneg t \} \subseteq \H$.
        Hence we have that $\{\red{s}, \red{t} \} \subseteq \HHH$ or
        $\{\rdneg \red{s}, \rdneg \red{t} \} \subseteq \HHH$.

  \item $\red{\vec{\nabla}_{=}^\rightarrow}$: Let $\red{f} \rdeq^{\nu\tau} \red{g} \in \HHH$.
        Then, by definition, $f \beq^{\nu\tau} g \in \H$. By $\blue{\vec{\nabla}_{\mathfrak{f}}^+}$
        it follows that ${f}\;s\beq^{\nu} {g}\;s \in \H$ for every closed term
        $s_\tau \in \Lambda_\tau^c(\blue{\Sigma})$ .
        Hence we have that $\red{f}\;\red{s}\rdeq^{\nu} \red{g}\;\red{s} \in \HHH$ for every closed
        $\red{s}_\tau \in \Lambda_\tau^c(\red{\Sigma})$.

  \item $\red{\vec{\nabla}_{=}^r}$: Assume $\rdneg(\red{s} \rdeq^i \red{s}) \in \HHH$.
        Then, by definition, $\bneg(s \beq^i s) \in \H$, contradicting $\blue{\vec{\nabla}_{=}^r}$.
        Hence we have that $\rdneg(\red{s} \rdeq^i \red{s}) \notin \HHH$.

  \item $\red{\vec{\nabla}_{=}^u}$: Let $(\red{s} \rdeq^i \red{t}) \in \HHH$ and
        $\rdneg(\red{u} \rdeq^i \red{v}) \in \HHH$.
        Then, by definition, $(s \beq^i t) \in \H$ and $\bneg(u \beq^i v) \in \H$.
        By $\blue{\vec{\nabla}_{\mathit{m}}}$ it follows that
        $(s \beq^i t) \bneq^o (u \beq^i v) \in \H$. By $\blue{\vec{\nabla}_{\mathit{d}}}$
        we then have $s \bneq^i u \in \H$ or $t \bneq^i v \in \H$.
        Hence, $\rdneg(\red{s} \rdeq^i \red{u}) \in \HHH$ or $\rdneg(\red{t} \rdeq^i \red{v}) \in \HHH$.
        For the second half, we know by Corollary~\ref{corollary:equalitysaturated} that $\H$ is saturated
        and hence by Lemma~\ref{lemma:saturated}\ref{lemma:saturated3} it holds that $(t \beq^i s) \in \H$.
        Applying $\blue{\vec{\nabla}_{=}^s}$ yields
        $t \bneq^i u \in \H$ or $s \bneq^i v \in \H$, and thus
        $\rdneg(\red{t} \rdeq^i \red{u}) \in \HHH$ or $\rdneg(\red{s} \rdeq^i \red{v}) \in \HHH$.

\end{description}
\end{proof}

We claim that this result analogously holds if the original definitions of Andrews for the logical connectives
are assumed instead of the slightly modified ones introduced in \S\ref{sec:steen} and
used in~\cite{DBLP:phd/dnb/Steen18}; a technical proof remains future work.

%%%%%%%%%%%%%%%%%%%%%%%%%%%%%%%%%%%%%%%%%%%%%%%%%%%%%%%%%%%%%%%%%%%%%%%%%%%%%%%%%%%%%%%%
%%%%%%%%%%%%%%%%%%%%%%%%%%%%%%%%%%%%%%%%%%%%%%%%%%%%%%%%%%%%%%%%%%%%%%%%%%%%%%%%%%%%%%%%
%%%%%%%%%%%%%%%%%%%%%%%%%%%%%%%%%%%%%%%%%%%%%%%%%%%%%%%%%%%%%%%%%%%%%%%%%%%%%%%%%%%%%%%%
\section{Use Case: Bridging Model Existence}\label{sec:bridge}
\newcommand{\M}{\mathcal{M}}
\newcommand{\MM}{\mathcal{M}_{\blue{\triangleleft}}}

In this section, we apply the above reduction to derive a model existence theorem for 
Steen's Hintikka sets. Informally, we proceed as follows:
There exists an extensional $\{\req, \rneg\}$-model $\mathcal{M} \in \mathfrak{M}_{\beta\mathfrak{b}\mathfrak{f}}$
such that $\mathcal{M} \models \HHH$. Because we constructed $\HHH$ as an extensional Hintikka set
based on negation and equality, we know that the domain of Booleans in
$\mathcal{M}$ is bivalent.
In order to get a model solely based on equality (as required in the notion of Steen), we subsequently 
restrict $\mathcal{M}$ to terms over $\{ \req^\tau \mid \tau \in \types \}$.
Finally, we find an extensional model over frames isomorphic to it.\footnote{
We take a slight indirection here: We first assume negation is part of the red signature, to make sure there exists
an element in $\mathrm{n} \in \mathcal{D}_{oo}$ that is the interpretation of negation. Sadly, it seems there is currently
no easier way to enforce its existence; a more convenient way would be to show that there is an extensional model that satisfies
$\mathcal{L}_{\neg}(\mathrm{n})$ without having negation in the signature, cf.~\cite{brown07,J6} for details on these
semantic $\mathcal{L}$-properties.
}

First, we summarize important results by Brown~\cite{brown07} used in this reduction.

\begin{theorem}[Model Existence for Extensional Hintikka Sets~{\cite[Theorem 5.7.17]{brown07}}] \label{theorem:brown1}
Let $\H$ be an extensional $\Sigma$-Hintikka set
(i.e., $\H \in \mathfrak{Hint}_{\beta\mathfrak{b}\mathfrak{f}}(\Sigma)$).
There is an extensional $\Sigma$-model $\mathcal{M}$
(i.e., $\mathcal{M} \in \mathfrak{M}_{\beta\mathfrak{b}\mathfrak{f}}(\Sigma)$)
such that $\mathcal{M} \models \H$.
\end{theorem}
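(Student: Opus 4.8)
The statement is Brown's model existence theorem, so the plan reconstructs the standard term-model argument. The overall strategy is to manufacture a Henkin model $\mathcal{M}$ directly out of the syntactic data in $\H$, using the closed terms as raw material for the domains, and then to prove a truth lemma stating that $\H$ is satisfied.

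First I would build the frame by recursion on types as a quotient of closed terms under a congruence $\approx$ induced by $\H$. At the Boolean type I would take $\mathcal{D}_o := \{\mathsf{T}, \mathsf{F}\}$, a bivalent domain; this is justified because $\H$ is extensional, so $\red{\vec{\nabla}_{\mathfrak{b}}}$ and $\red{\vec{\nabla}_{=}^o}$ force $\rdeq^o$ to act as the biconditional while $\red{\vec{\nabla}_c}$ and $\red{\vec{\nabla}_\neg}$ fix a coherent two-valued reading of $\rneg$. At the base type $\iota$ I would set $s \approx t :\Leftrightarrow (s \rdeq^\iota t) \in \H$; the properties $\red{\vec{\nabla}_{=}^r}$ and $\red{\vec{\nabla}_{=}^u}$ are precisely what is needed to check reflexivity, symmetry, transitivity, and substitutivity, so this is a genuine congruence and the quotient $\mathcal{D}_\iota$ is well-defined (assuming $\iota$ is inhabited by some closed term). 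At a function type $\nu\tau$ I would define $\approx$ extensionally, $f \approx g$ iff $f\,u \approx g\,u$ for every closed $u$, and let $\mathcal{D}_{\nu\tau}$ be the resulting classes, read as actual functions on the lower domains; this makes the structure a Henkin frame with comprehension holding by construction, since every closed term denotes its own class.

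Next I would verify that application is well-defined on classes and that the evaluation $\mathcal{E}$ mapping each closed term to its class is sound for $\H$: I need $[f][u] := [f\,u]$ to respect $\approx$, which follows from the extensional definition at function type together with substitutivity at the argument, and I need $(s \rdeq^\tau t) \in \H$ to imply $\mathcal{E}(s) = \mathcal{E}(t)$. The last point is where the equality properties do their work: $\red{\vec{\nabla}_{\mathit{dec}}}$ and $\red{\vec{\nabla}_{\mathit{m}}}$ handle rigid atomic heads, $\red{\vec{\nabla}_{\mathfrak{f}}}$ and $\red{\vec{\nabla}_{=}^\rightarrow}$ let me pass between a functional equation and its instances, and $\red{\vec{\nabla}_{\beta\eta}}$ guarantees invariance under conversion. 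I would then interpret each logical constant correctly, $\rneg$ as Boolean complement and each $\red{=}^\tau$ as the characteristic function of equality in $\mathcal{D}_\tau$, and with this in hand prove the truth lemma by induction on the structure of a proposition $s$: if $s \in \H$ then $\mathcal{M} \models s$, and if $\rdneg s \in \H$ then $\mathcal{M} \not\models s$. Each inductive case collapses to the matching Hintikka property, and $\red{\vec{\nabla}_c}$ ensures the two clauses never conflict, so that satisfaction of the whole set follows.

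I expect the main obstacle to be the construction of the function domains together with the simultaneous demands that $\mathcal{E}$ be well-defined modulo $\approx$ and that the quotient still form a genuine Henkin frame. Enforcing functional and Boolean extensionality by quotienting threatens to collapse distinct elements, whereas comprehension — every term must denote — pushes the other way; reconciling the two, i.e.\ showing that the term frame modulo the induced congruence remains closed under application and interprets the primitive symbols faithfully, is the delicate core of the argument and is exactly where Brown's detailed frame-theoretic development is needed.
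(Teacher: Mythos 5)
This theorem is not proved in the paper at all: it is imported verbatim from Brown~\cite[Theorem 5.7.17]{brown07} as one of the black-box results used in \S\ref{sec:bridge}, so there is no in-paper argument to compare yours against. Your proposal is therefore a reconstruction of Brown's external proof. As an outline it is the right kind of argument (a quotiented term model with a truth lemma driven by the $\red{\vec{\nabla}}$-properties), but one central step fails as written.

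The problem is your definition of the base-type congruence, $s \approx t :\Leftrightarrow (s \rdeq^\iota t) \in \H$. Hintikka sets in the sense of \S\ref{sec:chad} are closed under \emph{conditional} properties only; nothing forces any positive equation to belong to $\H$. In particular $\red{\vec{\nabla}_{=}^r}$ says $\rdneg(s \rdeq^\iota s) \notin \H$, not $(s \rdeq^\iota s) \in \H$, so your relation need not be reflexive --- for instance the empty set is an extensional Hintikka set and your $\approx$ is then the empty relation, so the quotient construction collapses while a model plainly exists. The relation actually used in Brown's construction is the complementary one, $s \approx t :\Leftrightarrow \rdneg(s \rdeq^\iota t) \notin \H$: reflexivity is then exactly $\red{\vec{\nabla}_{=}^r}$, and symmetry, transitivity and compatibility with the positive equations in $\H$ are what $\red{\vec{\nabla}_{=}^u}$ and $\red{\vec{\nabla}_{\mathit{dec}}}$ are designed to deliver. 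The same issue propagates to your function-type domains and to $\mathcal{D}_o$: since $\H$ need not be saturated, the truth value of a sentence on which $\H$ is silent must be fixed by a default choice, with $\red{\vec{\nabla}_c}$ guaranteeing only that the two clauses of the truth lemma never conflict. With the equivalence corrected in this way, the remainder of your sketch --- extensional quotient at function types, interpretation of $\rneg$ and $\req^\tau$, induction on propositions --- does match the shape of Brown's argument, and you correctly identify the reconciliation of quotienting with comprehension as the delicate core.
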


\begin{theorem}[Property $\mathfrak{b}$~{\cite[Theorem 3.3.7]{brown07}}] \label{theorem:brown2}
Let $\Sigma$ be a signature and $\mathcal{M}$ be an $\Sigma$-model.
Suppose either $\rtop, \rbot \in \Sigma$ or $\rneg \in \Sigma$.
Then $\mathcal{M}$ satisfies $\mathfrak{b}$ iff $\mathcal{D}_o$ has
two elements.
\end{theorem}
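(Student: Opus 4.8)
The statement is a characterization imported verbatim from Brown, so the plan is to unfold the relevant semantic definitions and reduce the claim to a counting argument about the Boolean domain $\mathcal{D}_o$. Recall that in Brown's semantics every $\Sigma$-model $\mathcal{M}$ carries a Boolean valuation $v \colon \mathcal{D}_o \to \{T,F\}$ and an interpretation $\mathcal{E}$ assigning to each logical constant an element of the appropriate domain, and that property $\mathfrak{b}$ is, by definition, the \emph{injectivity} of $v$ (Boolean extensionality: two propositions sharing a truth value are identified in $\mathcal{D}_o$). Once this is recalled, the equivalence becomes a set-theoretic fact about a map into a two-element set, and the only genuine content is hidden in the signature hypothesis.

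First I would establish that the hypothesis ``$\rtop, \rbot \in \Sigma$ or $\rneg \in \Sigma$'' forces $v$ to be \emph{surjective}, i.e.\ that both truth values are realized in $\mathcal{D}_o$. In the first case this is immediate: $\mathcal{E}(\rtop)$ and $\mathcal{E}(\rbot)$ lie in $\mathcal{D}_o$ with $v(\mathcal{E}(\rtop)) = T$ and $v(\mathcal{E}(\rbot)) = F$, since any $\Sigma$-model interprets the logical constants correctly. In the second case I would take any $a \in \mathcal{D}_o$ (domains are non-empty) and consider $a$ together with the value of $\mathcal{E}(\rneg)$ applied to $a$; correctness of the interpretation of $\rneg$ gives that this value has valuation $\ne v(a)$, so again both $T$ and $F$ occur. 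In particular $|\mathcal{D}_o| \ge 2$ under either alternative.

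With surjectivity of $v$ in hand, both directions are short. For ``$\mathfrak{b} \Rightarrow |\mathcal{D}_o| = 2$'', injectivity of $v$ gives $|\mathcal{D}_o| \le |\{T,F\}| = 2$, and surjectivity gives $|\mathcal{D}_o| \ge 2$, whence $|\mathcal{D}_o| = 2$. For the converse ``$|\mathcal{D}_o| = 2 \Rightarrow \mathfrak{b}$'', a surjection from a two-element set onto $\{T,F\}$ is automatically a bijection, so $v$ is injective, which is precisely property $\mathfrak{b}$.

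The main obstacle is not the counting argument but the semantic bookkeeping: one must pin down Brown's exact definition of property $\mathfrak{b}$ and confirm it coincides with injectivity of $v$ (rather than a weaker extensionality condition on, say, the interpretation of $\req^o$), and one must verify that the interpretation clauses for $\rneg$ (resp.\ $\rtop,\rbot$) in the definition of a $\Sigma$-model genuinely deliver the claimed valuations. The $\rneg$-only alternative is the delicate one, since there one cannot name the two truth values directly and must instead extract surjectivity from the defining property of the interpretation of negation. As the theorem is used here only as a black box, I would ultimately defer these verifications to~\cite{brown07} rather than reprove them.
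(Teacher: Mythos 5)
The paper does not prove this statement at all: it is imported as a black box from Brown~\cite[Theorem 3.3.7]{brown07} in \S\ref{sec:bridge}, so there is no in-paper proof to compare against. Your reconstruction is nevertheless the correct (and essentially Brown's own) argument: property $\mathfrak{b}$ is indeed injectivity of the valuation $v\colon\mathcal{D}_o\to\{T,F\}$, the signature hypothesis ($\rtop,\rbot\in\Sigma$ or $\rneg\in\Sigma$) is exactly what forces $v$ to be surjective --- via $v(\mathcal{E}(\rtop))=T$, $v(\mathcal{E}(\rbot))=F$ in the first case, or $v(\mathcal{E}(\rneg)\mathbin{@}a)\neq v(a)$ for some $a$ in the (nonempty) domain $\mathcal{D}_o$ in the second --- and the two directions then reduce to the observation that a surjection onto a two-element set is injective iff its domain has exactly two elements. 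The only caveats are the ones you already flag yourself, namely pinning down that Brown's $\mathfrak{b}$ is injectivity of $v$ rather than a condition on $\mathcal{E}(\req^o)$, and that the model conditions for $\rneg$, $\rtop$, $\rbot$ deliver the stated valuations; both check out against Brown's definitions, so deferring them to~\cite{brown07} is appropriate given that the paper itself treats the result as external.
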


\begin{theorem}[Isomorphic Models over Frames~{\cite[Theorem 3.5.6]{brown07}}] \label{theorem:brown3}
Let $\mathcal{M} = \big( \mathcal{D}, @, \mathcal{E}, v \big) \in \mathfrak{M}_{\beta\mathfrak{b}\mathfrak{f}}(\red{\Sigma})$
be an extensional $\Sigma$-model such that $\mathcal{D}_o$ has
two elements. There is an 
isomorphic $\Sigma$-model $\mathcal{M}^h = \big(\mathcal{D}^h, @^h, \mathcal{E}^h, v^h \big)$ over frames, in particular
$\mathcal{D}^h_o = \{T,F\}$ and $v^h$ is the identity.
\end{theorem}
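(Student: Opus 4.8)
The plan is to obtain $\mathcal{M}^h$ by transporting $\mathcal{M}$ along a type-indexed family of bijections that simultaneously collapses every functional domain to an honest set of functions, the whole construction being driven by functional extensionality. First I would note that, since $\mathcal{D}_o$ has exactly two elements and the valuation $v \colon \mathcal{D}_o \to \{T,F\}$ is onto in any $\Sigma$-model, $v$ is in fact a bijection; this is precisely what licenses taking $\mathcal{D}^h_o := \{T,F\}$ together with $v^h := \mathrm{id}$.

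Next I would define, by induction on the type $\tau$, the frame domain $\mathcal{D}^h_\tau$ together with an isomorphism $\kappa_\tau \colon \mathcal{D}_\tau \to \mathcal{D}^h_\tau$. At the base types I set $\mathcal{D}^h_o := \{T,F\}$ with $\kappa_o := v$, and $\mathcal{D}^h_\iota := \mathcal{D}_\iota$ with $\kappa_\iota := \mathrm{id}$. At a functional type $\nu\tau$, given $\kappa_\tau$ and $\kappa_\nu$, I send each $f \in \mathcal{D}_{\nu\tau}$ to the genuine set-theoretic function $\hat{f} \colon \mathcal{D}^h_\tau \to \mathcal{D}^h_\nu$ defined by $\hat{f}(x) := \kappa_\nu\big(f \,@\, \kappa_\tau^{-1}(x)\big)$, and put $\mathcal{D}^h_{\nu\tau} := \{\hat{f} \mid f \in \mathcal{D}_{\nu\tau}\}$ and $\kappa_{\nu\tau}(f) := \hat{f}$. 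Surjectivity of $\kappa_{\nu\tau}$ onto $\mathcal{D}^h_{\nu\tau}$ is immediate. Injectivity is the crucial step, and it is here that functional extensionality (property $\mathfrak{f}$, available because $\mathcal{M} \in \mathfrak{M}_{\beta\mathfrak{b}\mathfrak{f}}$) enters: if $\hat{f} = \hat{g}$, then by the induction hypothesis $\kappa_\tau$ is bijective, so $f \,@\, a = g \,@\, a$ for every $a \in \mathcal{D}_\tau$, whence $f = g$.

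With the bijections in hand I would take the frame applicative structure to be literal function application, $\phi \,@^h\, x := \phi(x)$, and verify the homomorphism equation $\kappa_\nu(f \,@\, a) = \kappa_{\nu\tau}(f) \,@^h\, \kappa_\tau(a)$, which holds by the very definition of $\hat{f}$. I then transport the evaluation function and the interpretation of parameters and logical constants along $\kappa$, setting $\mathcal{E}^h_\varphi(t) := \kappa\big(\mathcal{E}_{\kappa^{-1}\circ\varphi}(t)\big)$; a routine induction on the structure of $t$ shows that $\kappa$ commutes with evaluation, so that $\kappa$ is a bona fide isomorphism of $\Sigma$-models carrying the interpretations of $\req^\tau$ and $\rneg$ to the corresponding operations over the frame. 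Finally, since the defining $\beta$-, $\mathfrak{b}$- and $\mathfrak{f}$-properties are preserved under isomorphism — and $\mathcal{D}^h_o = \{T,F\}$ still has two elements, so $\mathfrak{b}$ holds by Theorem~\ref{theorem:brown2} — the model $\mathcal{M}^h$ lies in $\mathfrak{M}_{\beta\mathfrak{b}\mathfrak{f}}(\red{\Sigma})$ and is the desired isomorphic model over frames with $v^h$ the identity.

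The main obstacle is the injectivity of $\kappa_{\nu\tau}$: distinct functional objects of $\mathcal{M}$ must be mapped to \emph{distinct} honest functions, and this fails in general unless the model is functionally extensional, which is exactly why the hypothesis $\mathcal{M} \in \mathfrak{M}_{\beta\mathfrak{b}\mathfrak{f}}$ is indispensable. The second delicate point, conceptually routine but needing care, is checking that the transported evaluation genuinely respects the logical constants, so that $\mathcal{M}^h$ is an isomorphic model in the full sense rather than merely an isomorphic applicative structure.
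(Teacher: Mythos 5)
The paper does not prove this statement at all: Theorem~\ref{theorem:brown3} is quoted verbatim from Brown~\cite[Theorem 3.5.6]{brown07} as an imported result, so there is no in-paper proof to compare against. Your sketch is essentially the standard argument used in the cited source (and in the analogous Theorem~3.68 of \cite{J6}): a type-indexed family of bijections $\kappa_\tau$ collapsing each functional domain to a set of honest functions, with property $\mathfrak{f}$ supplying injectivity at functional types and the evaluation transported along $\kappa$. The construction is correct; the only points to tighten are that the injectivity step needs the induction hypothesis for \emph{both} $\kappa_\tau$ and $\kappa_\nu$ (you invoke only $\kappa_\tau$), and that the ``routine induction'' must explicitly verify that the transported $\mathcal{E}^h$ still satisfies the evaluation-function axioms and the $\mathcal{L}$-properties of $\rneg$ and $\req^\tau$, which you correctly flag as the delicate part.
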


Now we infer a model existence theorem for Steen's Hintikka properties by bridging to those
of Brown:
Let $\H \in \mathfrak{H}$ be a Hintikka set (according to Steen) over a signature $\blue{\SigmaEq}$.
Let $\HHH$ be the translated set according to
\S\ref{sec:reduction}. By Theorem~\ref{thm:reduction} it follows that $\HHH$ is an
extensional $\red{\Sigma}$-Hintikka set, for $\red{\Sigma} := \{ \rneg \} \cup \{ \req^\tau \mid \tau \in \types \} \cup \{ \red{p} \mid p \in \blue{\SigmaEq} \text{ is parameter} \}$.
By Theorem~\ref{theorem:brown1} it follows that there is
an extensional model $\mathcal{M} = \big( \mathcal{D}, @, \mathcal{E}, v \big) \in \mathfrak{M}_{\beta\mathfrak{b}\mathfrak{f}}(\red{\Sigma})$ such
that $\mathcal{M} \models \HHH$.
Since $\rneg \in \red{\Sigma}$ it follows by Theorem~\ref{theorem:brown2} that $\mathcal{D}_o$ is bivalent.\footnote{
The fact that $\M$ satisfies property $\mathfrak{b}$ follows directly from the fact that
$\M \in \mathfrak{M}_{\beta\mathfrak{b}\mathfrak{f}}(\red{\Sigma})$.
}

Now we eliminate $\rneg$ from the signature $\red{\Sigma}$ to get an extensional model over 
$\{ \req^\tau \mid \tau \in \types \}$; we refer to this signature as $\red{\SigmaEq}$, i.e., 
let $\red{\SigmaEq} := \red{\Sigma} \setminus \{ \rneg \}$.
To that end, let $\mathcal{M}^\prime := \big( \mathcal{D}, @, \mathcal{E}|_{\Lambda(\red{\SigmaEq})}, v \big)$.
$\mathcal{M}^\prime$ is an extensional $\red{\SigmaEq}$-model, in
particular $\mathcal{D}_o$ is bivalent~\cite[Theorem 3.3.15]{brown07}.
Now by Theorem~\ref{theorem:brown3} it follows that there is an $\red{\SigmaEq}$-model over frames
$\mathcal{M}^h = (\mathcal{D}^h, @^h, \mathcal{E}^h, v^h)$ isomorphic to $\mathcal{M}^\prime$.

We now construct our desired Henkin model $\MM$ for $\H$ over $\blue{\SigmaEq}$ as follows:
%Let $\varphi$ be some variable assignment. 
%Then 
$\MM := (\mathcal{D}_{\blue{\triangleleft}}, \mathcal{I}_{\blue{\triangleleft}})$, where
\begin{itemize}
  \item $\mathcal{D}_{\blue{\triangleleft}} := \mathcal{D}^h$, and
  \item $\mathcal{I}_{\blue{\triangleleft}} := c \mapsto \mathcal{E}^h(\red{c})$ 
        for all $c \in \blue{\SigmaEq}$.
  %\item $\mathcal{I}_{\blue{\triangleleft}}$ such that, for all $\tau \in \types$,
  %  $\mathcal{I}_{\blue{\triangleleft}}(\beq^\tau) \equiv \mathcal{E}^h_\varphi(\req^\tau)$, and
  %  for every parameter $p \in \blue{\SigmaEq}$ let
  %  $\mathcal{I}_{\blue{\triangleleft}}(p) \equiv \mathcal{E}^h_\varphi(\red{p})$.
\end{itemize}

It is easy to see that $\MM$ is a Henkin model; in particular
$\mathcal{I}(\beq^\tau)(\mathsf{a},\mathsf{b}) = T$ iff $\mathsf{a} \equiv \mathsf{b}$ for every 
$\mathsf{a},\mathsf{b} \in \mathcal{D}^h_\tau$, $\tau \in \types$,
by property $\mathcal{L}_{=^\tau}(\mathsf{q})$ of $\mathcal{M}^h$ 
for $\mathsf{q} \equiv \mathcal{E}(\req^\tau) \in \mathcal{D}^h_{o\tau\tau}$.
Finally we need to verify that $\MM \models \H$ indeed holds.
Let $s_o \in \H$. By definition $\MM \models s$ if and only if
$\|s\|^{\MM ,g} \equiv T$ for every $g$. 
An induction over the structure of $s$ yields the desired result:
If $s$ is an equality of the form $(l \beq^\tau r)$, then
$\|l \beq^\tau r\|^{\MM ,g} \equiv T$ iff $\|l\|^{\MM ,g} \equiv \|r\|^{\MM ,g}$.
Since $\red{l} \req^\tau \red{r} \in \HHH$, we know $\M \models \red{l} \req^\tau \red{r}$
and, consequently, that $\mathcal{E}_\varphi(\red{l}) \equiv \mathcal{E}_\varphi(\red{r})$
for every assignment $\varphi$.
It follows that $\mathcal{E}^h_\varphi(l[\beq\backslash\req]) \equiv \mathcal{E}^h_\varphi(r[\beq\backslash\req])$ 
and hence, by definition of $\MM$ and the induction hypothesis, 
$\mathcal{I}_{\blue{\triangleleft}}(\beq^\tau)(\|r\|^{\MM ,g},\|r\|^{\MM ,g}) = T$ 
and thus $\|l \beq^\tau r\|^{\MM ,g} \equiv T$, for every $g$.
For parameters $p \in \blue{\SigmaEq}$, the proposition follows directly.
For complex formulas of the form $(s\;\overline{s^n})$ and for abstractions,
 we apply the induction hypotheses to
every sub-formula.
This construction yields:

\begin{theorem}[Bridged Model Existence]
Let $\H \in \mathfrak{H}$ be a $\blue{\SigmaEq}$-Hintikka set. Then there exists a $\blue{\SigmaEq}$-Henkin model $\mathcal{M}$
such that $\mathcal{M} \models \H$.
\end{theorem}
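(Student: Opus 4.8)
The plan is to chain the reduction of Theorem~\ref{thm:reduction} with Brown's three model-theoretic results, and then transport the resulting frame model back into Steen's equality-only setting. First I would fix the red signature $\red{\Sigma} := \{\rneg\} \cup \{\req^\tau \mid \tau \in \types\} \cup \{\red{p} \mid p \in \blue{\SigmaEq}\text{ a parameter}\}$, deliberately retaining $\rneg$ even though Steen's formalism has no primitive negation. Applying Theorem~\ref{thm:reduction} to the given $\H \in \mathfrak{H}$ shows that $\HHH$ is an extensional Hintikka set over $\red{\Sigma}$, whence Theorem~\ref{theorem:brown1} produces an extensional model $\M \in \mathfrak{M}_{\beta\mathfrak{b}\mathfrak{f}}(\red{\Sigma})$ with $\M \models \HHH$.

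The next step is to secure bivalence of the Boolean domain. Because $\rneg \in \red{\Sigma}$, Theorem~\ref{theorem:brown2} applies, and the property $\mathfrak{b}$ that $\M$ enjoys as a member of $\mathfrak{M}_{\beta\mathfrak{b}\mathfrak{f}}(\red{\Sigma})$ forces $\mathcal{D}_o$ to have exactly two elements; this is precisely why negation is kept in the signature at this stage. Having extracted bivalence, I would then discard negation by restricting the evaluation to terms over $\red{\SigmaEq} := \red{\Sigma} \setminus \{\rneg\}$, obtaining an extensional $\red{\SigmaEq}$-model $\M'$ whose Boolean domain remains two-valued. Theorem~\ref{theorem:brown3} then supplies an isomorphic $\red{\SigmaEq}$-model over frames $\M^h$ with $\mathcal{D}^h_o = \{T, F\}$ and $v^h$ the identity.

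From $\M^h$ I would build the target Henkin model $\MM$ over $\blue{\SigmaEq}$ by keeping the same frame, $\mathcal{D}_{\blue{\triangleleft}} := \mathcal{D}^h$, and interpreting each Steen constant $c$ as the $\M^h$-denotation of its red image $\red{c}$. The observation making $\MM$ a genuine Henkin model is that in the frame model primitive equality is interpreted as true identity, $\mathcal{I}(\beq^\tau)(\mathsf{a},\mathsf{b}) = T$ iff $\mathsf{a} \equiv \mathsf{b}$, which follows from the semantic property $\mathcal{L}_{=^\tau}$ carried by $\M^h$. Satisfaction $\MM \models \H$ is then verified by induction on term structure: the base case of an equality $l \beq^\tau r$ reduces to $\|l\|^{\MM,g} \equiv \|r\|^{\MM,g}$, which holds because $\red{l} \rdeq^\tau \red{r} \in \HHH$ is satisfied by $\M$, so the red images of $l$ and $r$ share a denotation; parameters and compound terms follow directly from the induction hypothesis.

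The main obstacle I anticipate is not the inductive satisfaction argument but the bookkeeping around bivalence: one introduces $\rneg$ into the red signature purely to certify, via Theorem~\ref{theorem:brown2}, that $\mathcal{D}_o$ is two-valued, and then eliminates it again while arguing that neither bivalence nor the identity interpretation of each $\req^\tau$ is lost in passing from $\M$ to $\M'$ and thence to $\M^h$. Making this indirection compose cleanly --- so that the equality fragment of $\M^h$ still validates every $\red{l} \rdeq^\tau \red{r} \in \HHH$ once negation has been stripped away --- is the technically sensitive point of the construction.
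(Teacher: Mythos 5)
Your proposal follows essentially the same route as the paper: reduce via Theorem~\ref{thm:reduction}, invoke Theorems~\ref{theorem:brown1}--\ref{theorem:brown3} in the same order (keeping $\rneg$ in the red signature only long enough to certify bivalence, then restricting to $\red{\SigmaEq}$ and passing to an isomorphic frame model), and finally transport the frame model back to $\blue{\SigmaEq}$ with the same inductive satisfaction argument. The indirection around negation that you flag as the sensitive point is exactly the one the paper acknowledges and handles in the same way, so the proposal is correct and matches the paper's proof.
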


%%%%%%%%%%%%%%%%%%%%%%%%%%%%%%%%%%%%%%%%%%%%%%%%%%%%%%%%%%%%%%%%%%%%%%%%%%
%%%%%%%%%%%%%%%%%%%%%%%%%%%%%%%%%%%%%%%%%%%%%%%%%%%%%%%%%%%%%%%%%%%%%%%%%%
%%%%%%%%%%%%%%%%%%%%%%%%%%%%%%%%%%%%%%%%%%%%%%%%%%%%%%%%%%%%%%%%%%%%%%%%%%

%%%%%%%%%%%%%%%%%%%%%%%%%%%%%%%%%%%%%%%%%%%%%%%%%%%%%%%%%%%%%%%%%%%%%%%%%%%%%%%%%%%%%%%%
%% OLD REDUCTION

%%%%%%%%%%%%%%%%%%%%%%%%%%%%%%%%%%%%%%%%%%%%%%%%%%%%%%%%%%%%%%%%%%%%%%%%%%%%%%%%%%%%%%%%
%%%%%%%%%%%%%%%%%%%%%%%%%%%%%%%%%%%%%%%%%%%%%%%%%%%%%%%%%%%%%%%%%%%%%%%%%%%%%%%%%%%%%%%%
%%%%%%%%%%%%%%%%%%%%%%%%%%%%%%%%%%%%%%%%%%%%%%%%%%%%%%%%%%%%%%%%%%%%%%%%%%%%%%%%%%%%%%%%
\ignore{
\section{Hintikka sets as defined by Benzmüller et al.~\cite{J6,R37} \label{sec:benz}}
In the formulation of HOL as employed by Benzmüller et al.~\cite{J6,R37},
the set of primitive logical connectives is chosen to contain $\neg$, $\lor$ and
$\Pi^\tau$ for every $\tau \in \types$.
All remaining constant symbols from $\Sigma$ are called parameters. A signature $\Sigma$
with $\{ \neg, \lor \} \cup \{ \Pi^\tau \mid \tau \in \types \} \subseteq \Sigma$ is also
referred to as $\SigmaOne$. The remaining logical connectives
can be defined as usual~\cite{J6}. In their original formulation, Benzmüller et al. use 
$\to$ as function type constructor; we use the equivalent presentation
introduced above. Moreover, we apply the convention from above and,
e.g., 
denote general terms with lower case symbols $s$ and $t$ instead of upper case
$A$ and $B$, as used by Benzmüller et al.
In order to distinguish the (primitive, defined) connectives of this
variant of HOL
from further variants below, the connectives and the names of the
particular properties are written in \red{red}.

%%%%%%%%%%%%%%%%%%%%%%%%%%%%%%%%%%%%%%%%%%%%%%%%%%%%%%%%%%%%%%%%%%%%%%%%%%%%%%%%%%%%%%%%
\paragraph{Properties for Hintikka sets for $\mathfrak{M}_{\beta\mathfrak{f}\mathfrak{b}}$~\cite[Def. 6.19]{J6}:}
\mbox{}\\

$\red{\vec{\nabla}_c}:$ $s \notin \H$ or $\red{\neg} s \notin \H$

$\red{\vec{\nabla}_\neg}:$ If $\red{\neg}\red{\neg} s \in \H$, then $s \in H$

$\red{\vec{\nabla}_\beta}:$ If $s \in \H$ and $s \equiv_{\beta} t$, then $t \in \H$

$\red{\vec{\nabla}_\eta}:$  If $s \in \H$ and $s \equiv_{\beta\eta}
t$, then $t \in \H$

$\red{\vec{\nabla}_\lor}:$ If $s \red{\lor} t \in \H$, then $s \in \H$ or $t \in \H$

$\red{\vec{\nabla}_\land}:$ If $\red{\neg}(s \red{\lor} t) \in \H$, then $\red{\neg} s \in \H$ and $\red{\neg} t \in \H$

$\red{\vec{\nabla}_\forall}:$ If $\red{\Pi}^\tau s \in \H$, then
$(s\;t) \in \H$ for each closed term $t \in \Lambda^c_\tau(\Sigma)$

$\red{\vec{\nabla}_\exists}:$ If $\red{\neg}\red{\Pi}^\tau s \in \H$,
then there is a parameter $p_\tau \in \Sigma_\tau$ such that
  $\red{\neg}(s\;p) \in \H$

$\red{\vec{\nabla}_{\mathfrak{b}}}:$ If $\red{\neg}(s \red{\Leq}^o t) \in \H$, then $\{s, \red{\neg} t\} \subseteq \H$ or $\{\red{\neg} s, t\} \subseteq \H$

$\red{\vec{\nabla}_{\xi}}:$ If $\red{\neg}(\lambda X_\tau.\, s
\red{\Leq}^{\nu\tau} \lambda X.\, t) \in \H$, then there is a
parameter $w_\tau \in \Sigma_\tau$ s.t.~$\red{\neg}\big([w/X]s \red{\Leq}^\nu [w/X]t\big) \in \H$

$\red{\vec{\nabla}_{\mathfrak{f}}}:$ If $\red{\neg}(f \red{\Leq}^{\nu\tau} g) \in \H$, then there is a parameter $p_\tau \in \Sigma_\tau$
such that $\red{\neg}(f\;p \red{\Leq}^\nu g\; p) \in \H$\\

\noindent The collection of all sets satisfying all these properties is called $\mathfrak{Hint}_{\beta\mathfrak{f}\mathfrak{b}}$.

\paragraph{Additional properties for acceptable Hintikka sets for $\mathfrak{M}_{\beta\mathfrak{f}\mathfrak{b}}$~\cite[Def. 6.1]{R37}:}
\mbox{}\\

$\red{\vec{\nabla}_m}:$ If $s,t \in \Lambda_o^c(\Sigma)$ are atomic and $s, \red{\neg} t \in \H$, then
  $\red{\neg}(s\, \red{\Leq}^o t) \in \H$

$\red{\vec{\nabla}_d}:$ If $\red{\neg}(h\,\overline{s^n}\, \red{\Leq}^\beta h\,\overline{t^n}) \in \H$ where $\beta \in \{o,\iota\}$
  and $h$ is a parameter, then there is an $i$ with $1 \leq i \leq n$ such that $\red{\neg}(s^i\,\red{\Leq} t^i) \in \H$\\

\noindent Hintikka sets $\H \in \mathfrak{Hint}_{\beta\mathfrak{f}\mathfrak{b}}$ are called
acceptable (in $\mathfrak{Hint}_{\beta\mathfrak{f}\mathfrak{b}}$) if they satisfy both $\red{\vec{\nabla}_m}$ and $\red{\vec{\nabla}_d}$.
}

%%%%%%%%%%%%%%%%%%%%%%%%%%%%%%%%%%%%%%%%%%%%%%%%%%%%%%%%%%%%%%%%%%%%%%%%%%%%%%%%%%%%%%%%
%%%%%%%%%%%%%%%%%%%%%%%%%%%%%%%%%%%%%%%%%%%%%%%%%%%%%%%%%%%%%%%%%%%%%%%%%%%%%%%%%%%%%%%%
%%%%%%%%%%%%%%%%%%%%%%%%%%%%%%%%%%%%%%%%%%%%%%%%%%%%%%%%%%%%%%%%%%%%%%%%%%%%%%%%%%%%%%%%
\ignore{
\newcommand{\HH}{\H^{\red{\to}}}
\section{Reduction of $\mathfrak{H}$ (Steen) to
  $\mathfrak{Hint}_{\beta\mathfrak{f}\mathfrak{b}}$ (Benzmüller et
  al.)} \label{sec:reduction}
  
\newcommand{\HHH}{\H[\dot{=}]}
\newcommand{\HHHH}{\H[\dot{=}]^{\red{\to}}}
 
We reduce the notion of Hintikka sets of Steen to the notion of Hintikka sets by Benzmüller et al.
To that end, an translation scheme is employed as follows:
Let $\H^{\red{\to}}$ be the set that is constructed from a set $\H$ of closed formulas by replacing all occurrences of
\blue{blue} connectives by their corresponding \red{red} connective. Note that this might replace 
defined blue connectives 
by primitive red connectives, e.g.\ $\blue{\lor}$ by $\red{\lor}$, and defined blue connectives by 
defined red connectives, e.g.\ $\blue{\dot{=}}$ by $\red{\dot{=}}$. In case of equality,
the primitive blue equality connective $\blue{=}$ is replaced by a red symbol $\red{=}$ which
might be a primitive logical connective (if the target language is with equality) or simply
a parameter (if the target language is without equality).
The (recursively) translated formula associated with $s$ is written $\red{s}$ (in red).

Since the notion of Hintikka sets by Benzmüller et al. in~\cite{R37} does not assume
equality $\red{=}$ to be a logical connective available in the signature $\SigmaOne$,
we first transform the set $\H$ of closed formulas using ``Leibnizification'' into a set
$\H[\dot{=}]$ that does not contain primitive equality (except in defined logical connectives):
Let $\H$ be a set of closed formulas and let $\H[\dot{=}]$ denote the \emph{Leibnizification of $\H$},
given by
\begin{equation*}
 \H[\dot{=}] := \{ s[{\dot{=}}] \mid s \in \H \}
\end{equation*}
where $s[{\dot{=}}]$ is the formula that is created by replacing all occurrences of
$\blue{=}$ in $s$ that are not part of the definition of a defined
logical connective by $\blue{\dot{=}}$.

For $\H \in \mathfrak{H}$ we now construct a translated Hintikka set based on $\H[\dot{=}]$.
For that purpose, let $\H[\dot{=}]^{\red{\to}}$ be a set that is constructed by translating all blue
connectives to red connectives in $\H[\dot{=}]$ as described further above.

As an example, let $\H := \big\{\ldots,
  \big((p\;\blue{\land}) \;\blue{=}\; (p\;\blue{=^o}) \big) \blue{=} \big( (p\;\blue{\lor}) \;\blue{\dot{=}}\; (p\;\blue{=^o}) \big),
\ldots\big\}$, where $p \in \SigmaEq$ is some parameter of appropriate type. $\H[\dot{=}]$
is then constructed by systematically replacing all occurrences of primitive equality by Leibniz equality
if not occurring as part of the definition of a defined logical connective, i.e., we have
that $\H[\dot{=}] \equiv \big\{\ldots,
  \big((p\;\blue{\land}) \;\blue{\dot{=}}\; (p\;\blue{\dot{=}^o}) \big) \blue{\dot{=}} \big( (p\;\blue{\lor}) \;\blue{\dot{=}}\; (p\;\blue{\dot{=}^o}) \big),
\ldots\big\}$. Subsequently, $\H[\dot{=}]$ is translated to the languages based on $\SigmaOne$, i.e.,
using the red connectives: $\HHHH \equiv \big\{\ldots,
  \big((\red{p}\;\red{\land}) \;\red{\dot{=}}\; (\red{p}\;\red{\dot{=}^o}) \big) \red{\dot{=}} \big( (\red{p}\;\red{\lor}) \;\red{\dot{=}}\; (\red{p}\;\red{\dot{=}^o}) \big),
\ldots\big\}$.

In the following, we write $\red{s}[\dot{=}]$ for the term $\red{s^\prime}$ such that
$s^\prime \equiv s[\dot{=}]$, i.e., the term that is created from $s \in \Lambda(\SigmaEq)$
by first replacing the blue primitive equalities as indicated above and then translating 
the connectives to their red counterpart.

%\begin{lemma}\label{lemma:hhhhtoh}
%Let $\H \in \mathfrak{H}$ and let $\HHHH$ as
%defined above. We have
%${u}[{\dot{=}}] \in \HHHH$, if and only if $u\in \H$.
%\end{lemma}
%\begin{proof} By induction on the
%structure of $u$ with the help of 
%Lemmata~\ref{lemma:additional}\ref{lemma:additional2}
%and~\ref{lemma:additional}\ref{lemma:additional3}.  
%
%\red{Oder etwa  nicht? (Braucht man eine Generalisierung der Aussage?)}
%\end{proof}

\begin{lemma}\label{lemma:hhhhtoh}
Let $\H \in \mathfrak{H}$ be a Hintikka set and let $\HHHH$ be its translation as indicated above.
Then, it holds that
\begin{enumerate}[(a)]
  \item \label{lemma:hhhhtoh1} If $(\red{s}[\dot{=}] \red{\Leq} \red{t}[\dot{=}]) \in \HHHH$, then $(s \blue{=} t) \in \H$.
  \item \label{lemma:hhhhtoh2} If $\red{\neg}(\red{s}[\dot{=}] \red{\Leq} \red{t}[\dot{=}]) \in \HHHH$, then $(s \blue{\neq} t) \in \H$.
\end{enumerate}
\end{lemma}
\begin{proof}
Let $\H \in \mathfrak{H}$ be a Hintikka set and let $\HHHH$ its translation as indicated above. 
\begin{enumerate}[(a)]
  \item Let $(\red{s}[\dot{=}] \red{\Leq} \red{t}[\dot{=}]) \in \HHHH$, then by definition 
        $(s[\dot{=}] \blue{\Leq} t[\dot{=}]) \in \HHH$ and thus it follows that
        (i) $(s \blue{=} t) \in \H$, or
        (ii) $(s \blue{\Leq} t) \in \H$.
        In the former case, we are done. In the latter case, it follows
        by Lemma~\ref{lemma:additional}\ref{lemma:additional2} that $(s \blue{=} t) \in \H$.
  \item Let $\red{\neg}(\red{s}[\dot{=}] \red{\Leq} \red{t}[\dot{=}]) \in \HHHH$, then by definition 
        $\blue{\neg}(s[\dot{=}] \blue{\Leq} t[\dot{=}]) \in \HHH$ and thus it follows that
        (i) $(s \blue{\neq} t) \in \H$, or
        (ii) $\blue{\neg}(s \blue{\Leq} t) \in \H$.
        In the former case, we are done. In the latter case, it follows
        by Lemma~\ref{lemma:additional}\ref{lemma:additional3} that $(s \blue{\neq} t) \in \H$.
\end{enumerate}
\end{proof}

We now show that if $\H \in \mathfrak{H}$ then
$\HHHH \in \mathfrak{Hint}_{\beta\mathfrak{f}\mathfrak{b}}$ and $\HHHH$ is acceptable in $\mathfrak{Hint}_{\beta\mathfrak{f}\mathfrak{b}}$ (i.e.,$\HHHH$  fulfils all $\red{\vec{\nabla}}$ from \S\ref{sec:benz}).

\begin{theorem}[Reduction of {$\mathfrak{H}$ to $\mathfrak{Hint}_{\beta\mathfrak{f}\mathfrak{b}}$}]
If $\H \in \mathfrak{H}$, then there exists an acceptable
$\H^\prime \in \mathfrak{Hint}_{\beta\mathfrak{f}\mathfrak{b}}$
such that $\HHHH \equiv \mathcal{H}^\prime$.  \label{thm:reduction}
\end{theorem}
\begin{proof}
Let $\H \in \mathfrak{H}$ be a Hintikka set according
\S\ref{sec:steen}, i.e., fulfilling all $\blue{\vec{\nabla}}$
properties. 

\noindent We verify every $\red{\vec{\nabla}}$-property for $\HHHH$:
\begin{description}
  \item $\red{\vec{\nabla}_c}$: This follows immediately by definition of $\HHHH$ and 
    $\blue{\vec{\nabla}}_c$.

  \item $\red{\vec{\nabla}_\neg}$: Let $\red{\neg}\red{\neg}\red{s}[\dot{=}] \in \HHHH$, then
    by definition 
    $\blue{\neg}\blue{\neg} s \in \H$. By Lemma~\ref{lemma:usualconnectives}\ref{lemma:usualconnectives1} it follows that
    $s \in \H$ and hence $\red{s}[\dot{=}] \in \HHHH$.
    
  \item $\red{\vec{\nabla}_\beta}$: Let $\red{s}[\dot{=}] \in \HH$ and $\red{s}[\dot{=}] \equiv_{\beta} \red{t}[\dot{=}]$.
  Then, $s \in \H$ and $s \equiv_{\beta} t$, and hence $t \in \H$ by $\blue{\vec{\nabla}_{\beta\eta}}$.
  It follows that $\red{t}[\dot{=}] \in \HHHH$.

  \item $\red{\vec{\nabla}_\eta}$: \emph{Analogous to the previous case.}

  \item $\red{\vec{\nabla}_\lor}$: Let $\red{s} \red{\lor} \red{t}[\dot{=}] \in \HHHH$.
  Then, it holds that $s \blue{\lor} t \in \H$. By Lemma~\ref{lemma:usualconnectives}\ref{lemma:usualconnectives2}
  we have that $s \in \H$ or $t \in \H$ and hence $\red{s}[\dot{=}] \in \HHHH$ or $\red{t}[\dot{=}] \in \HHHH$.

  \item $\red{\vec{\nabla}_\land}:$ Let $\red{\neg}(\red{s} \red{\lor} \red{t})[\dot{=}] \in \HHHH$.
  Then, $\blue{\neg}(s \blue{\lor} t) \in \H$ and by definition
  $\blue{\neg}\blue{\neg}((\blue{\neg}s) \blue{\land} (\blue{\neg}t)) \in \H$.
  By Lemma~\ref{lemma:usualconnectives}\ref{lemma:usualconnectives1}, it follows
  that $(\blue{\neg}s) \blue{\land} (\blue{\neg}t) \in \H$.
  By Lemma~\ref{lemma:usualconnectives}\ref{lemma:usualconnectives3}, it follows
  that $\blue{\neg}s \in \H$ and $\blue{\neg}t \in \H$.
  Hence, $\red{\neg} \red{s}[\dot{=}] \in \HHHH$ and $\red{\neg} \red{t}[\dot{=}] \in \HHHH$.

  \item $\red{\vec{\nabla}_\forall}:$ Let $\red{\Pi}^\alpha \red{s}[\dot{=}] \in \HHHH$.
  Then, $\blue{\Pi}^\alpha\;s \in \H$. By Lemma~\ref{lemma:usualconnectives}\ref{lemma:usualconnectives4} it follows
  that $s\;t \in \H$ for every closed term $t$.
  Hence, $(\red{s}\;\red{t})[\dot{=}] \in \HHHH$ for every closed term $t$.

  \item $\red{\vec{\nabla}_\exists}$: Let $\red{\neg\Pi}^\alpha \red{s}[\dot{=}] \in \HHHH$.
  Then, $\blue{\neg\Pi}^\alpha\;s \in \H$. By
  Lemma~\ref{lemma:usualconnectives}\ref{lemma:usualconnectives5} it follows
  that $\blue{\neg}(s\;w) \in \H$ for some parameter $w$.
  Hence, $\red{\neg}(\red{s}\;\red{w})[\dot{=}] \in \HHHH$ for some parameter $w$.

  \item $\red{\vec{\nabla}_{\mathfrak{b}}}:$ Let $\red{\neg}(\red{s} \red{\Leq}^o \red{t})[\dot{=}] \in \HHHH$.
  Then by Lemma~\ref{lemma:hhhhtoh} it follows that
  $s \blue{\neq} t \in \H$, from which we get by $\blue{\vec{\nabla}_{\mathfrak{b}}^-}$
  that $\{s, \blue{\neg}t \} \subseteq \H$ or $\{\blue{\neg}s, t \} \subseteq \H$.
  Hence, $\{\red{s}[\dot{=}], \red{\neg} \red{t}[\dot{=}]\} \subseteq \HHHH$ or $\{\red{\neg} \red{s}[\dot{=}], \red{t}[\dot{=}]\} \subseteq \HHHH$.

  \item $\red{\vec{\nabla}_{\xi}}:$ Let
  $\red{\neg}(\red{\lambda X_\alpha.\, M} \red{\Leq}^{\alpha\to\beta} \red{\lambda X.\, N})[\dot{=}] \in \HHHH$.
  Then by Lemma~\ref{lemma:hhhhtoh} it follows that
  $(\lambda X_\alpha.\, M) \blue{\neq}^{\alpha\to\beta} (\lambda X.\, N) \in \H$.
  By $\blue{\vec{\nabla}_{\mathfrak{f}}^-}$ and $\blue{\vec{\nabla}_{\beta\eta}}$ it then follows that
  $[w/X]M \blue{\neq} [w/X]N \in \H$ for some parameter $w$.
  By Corollary~\ref{corollary:equalitysaturated}, it follows that $\H$ is saturated,
  and by Lemma~\ref{lemma:saturated} it hence follows that
  $\blue{\neg}([w/X]M \blue{\Leq} [w/X]N) \in \H$. This implies $\red{\neg}(\red{[w/X]M} \red{\Leq} \red{[w/X]N})[\dot{=}] \in \HHHH$
  
  \item $\red{\vec{\nabla}_{\mathfrak{f}}}$: \emph{Analogous to the previous case.}

  \item $\red{\vec{\nabla}_{{m}}}$: Let $\red{s}[\dot{=}], \red{\neg} \red{t}[\dot{=}] \in \HHHH$,
  where $\red{s}[\dot{=}],\red{t}[\dot{=}]$ atomic.
  Then, $s, \blue{\neg} t \in \H$. By $\blue{\vec{\nabla}_m}$ it follows
  that $s \blue{\neq} t \in \H$.
  Moreover, since $\blue{\neg} t \in \H$, it follows from by
  Lemma~\ref{lemma:sufficient}\ref{lemma:sufficient2} that 
  $\H$ is saturated. Hence, by
  Lemma~\ref{lemma:negatedequality}\ref{lemma:negatedequality3}, it holds that $\blue{\neg}(s \blue{\Leq} t) \in \H$
  and consequently $\red{\neg}(\red{s} \red{\Leq} \red{t})[\dot{=}] \in \HHHH$.
  
  \item $\red{\vec{\nabla}_{{d}}}$:
    Let $\red{\neg}(\red{h\,\overline{s^n}}\, \red{\Leq}^\beta \red{h\,\overline{t^n}})[\dot{=}] \in \HHHH$,
    where $\beta \in \{o,\iota\}$ and $h$ is a parameter.
    Then by Lemma~\ref{lemma:hhhhtoh} it follows that
    $({h\,\overline{s^n}}\, \blue{\neq}^\beta {h\,\overline{t^n}}) \in \H$.
    By $\blue{\vec{\nabla}_d}$ there exists some $1 \leq i \leq n$, s.t.
    $s^i \blue{\neq} t^i \in \H$. It follows by Corollary~\ref{corollary:equalitysaturated}
    that $\H$ is saturated, and hence
    by Lemma~\ref{lemma:negatedequality}\ref{lemma:negatedequality3}, it holds that $\blue{\neg}(s^i \blue{\Leq} t^i) \in \H$.
    Consequently, $\red{\neg}(\red{s^i} \red{\Leq} \red{t^i})[\dot{=}] \in \HHHH$.
\end{description}
\end{proof}

We claim that this result analogously holds if the original definitions of Andrews for the logical connectives
are assumed instead of the slightly modified ones introduced in \ref{sec:steen} and
used in~\cite{DBLP:phd/dnb/Steen18}; a technical proof remains future work.

%%%%%%%%%%%%%%%%%%%%%%%%%%%%%%%%%%%%%%%%%%%%%%%%%%%%%%%%%%%%%%%%%%%%%%%%%%%%%%%%%%%%%%%%
%%%%%%%%%%%%%%%%%%%%%%%%%%%%%%%%%%%%%%%%%%%%%%%%%%%%%%%%%%%%%%%%%%%%%%%%%%%%%%%%%%%%%%%%
%%%%%%%%%%%%%%%%%%%%%%%%%%%%%%%%%%%%%%%%%%%%%%%%%%%%%%%%%%%%%%%%%%%%%%%%%%%%%%%%%%%%%%%%
\section{Use Case: Bridging Model Existence}

In this section, we apply the above reduction to get a model existence theorem for 
$\SigmaEq$-Hintikka sets.
First, we recapitulate relevant results by Benzmüller et al.~\cite{J6,R37}.

The properties $\mathfrak{b}$, $\mathfrak{f}$, and $\mathfrak{q}$ used in the subscript of model
classes $\mathfrak{M}$ and elsewhere refer to the following properties of a HOL model $M$~\cite[Def. 3.46]{J6}:\\[.5em]
\begin{tabular}{lp{.8\textwidth}}
 $\mathfrak{b}$: & There are only two truth values ($M$ satisfies Boolean extensionality). \\
 $\mathfrak{f}$: & $M$ is functional ($M$ satisfies functional extensionality). \\
 $\mathfrak{q}$: & For all $\tau \in \types$, there exists an element $\mathsf{q}^\tau$
                         in the respective domain such that $\mathsf{q}^\tau(x,y) \equiv \top$ iff
                         $x \equiv y$ for all $x,y$ in the domain of $\tau$.
\end{tabular}\\[.5em]
Property $\mathfrak{q}$ is always assumed, cf.~\cite[Remark 3.52]{J6}.

\begin{theorem}[Model existence for {$\mathfrak{Hint}_{\beta\mathfrak{f}\mathfrak{b}}$}{~\cite[Theorem 8.12]{R37}}]
Let $\H \in \mathfrak{Hint}_{\beta\mathfrak{f}\mathfrak{b}}$ be an $\SigmaOne$-Hintikka set that is acceptable
in $\mathfrak{Hint}_{\beta\mathfrak{f}\mathfrak{b}}$.
Then there exists a $\SigmaOne$-model $\mathcal{M} \in \mathfrak{M}_{\beta\mathfrak{f}\mathfrak{b}}$ such that
$\mathcal{M} \models \mathcal{H}$.
\end{theorem}

\begin{theorem}[Henkin models for {$\mathfrak{M}_{\beta\mathfrak{f}\mathfrak{b}}$}]
Let $\mathcal{M} \in \mathfrak{M}_{\beta\mathfrak{f}\mathfrak{b}}$ be a $\Sigma$-model.
Then, there exists an $\Sigma$-Henkin model $\mathcal{M}^{H}$ that is isomorphic to $\mathcal{M}$.
\end{theorem}
\begin{proof}
Let $\mathcal{M} \in \mathfrak{M}_{\beta\mathfrak{f}\mathfrak{b}}$ be a $\Sigma$-model.
Then, there exists an isomorphic $\Sigma$-model $\mathcal{M}^{\mathit{fr}}$ over a frame~\cite[Theorem 3.68]{J6}.
$\mathcal{M}^{\mathit{fr}}$ satisfies properties $\mathfrak{q}$, $\mathfrak{f}$ and $\mathfrak{b}$
(since $\mathcal{M}$ does)~\cite[Lemma 3.67]{J6} and hence is a $\Sigma$-Henkin model.
\end{proof}

The two above theorems can be combined in a straight-forward manner, yielding
\begin{corollary}\label{corollary:hintikkaToHenkin}
Let $\H \in \mathfrak{Hint}_{\beta\mathfrak{f}\mathfrak{b}}$ be an $\SigmaOne$-Hintikka set that is acceptable
in $\mathfrak{Hint}_{\beta\mathfrak{f}\mathfrak{b}}$. Then, there exists a $\SigmaOne$-Henkin model $\mathcal{M}$
such that $\mathcal{M} \models \H$.
\end{corollary}

\paragraph{Model existence for Hintikka sets $\mathfrak{H}$.}
All we have to show is a rather technical lemma, stating that if there exists a
$\SigmaOne$-Henkin model satisfying $\HHHH$ then,
there exists a $\SigmaEq$-Henkin model satisfying $\H$. Note that
this can be shown quite easily
in contrast to model existence theorems in general.

\begin{lemma}\label{lemma:SigmaEqToSigmaOne}
Let $\H \subseteq \Lambda^c(\SigmaEq)$ be a set of closed formulas over $\SigmaEq$.
Let $\mathcal{M}$ be a $\SigmaOne$-Henkin model, such that $\mathcal{M} \models \HHHH$. Then
there exists a $\SigmaEq$-Henkin model $\mathcal{M}^{\blue{\leftarrow}}$ such that
$\mathcal{M}^{\blue{\leftarrow}} \models \H$.
\end{lemma}
\begin{proof}
Let $\mathcal{M} \equiv (\mathcal{D}, \mathcal{I})$ be a $\SigmaOne$-Henkin model, such that $\mathcal{M} \models \HHHH$.\\
Let $\SigmaEq \equiv \{ p \mid \text{$\red{p} \in \SigmaOne$ is a parameter} \} \cup \{ \blue{=}^\tau \mid  \tau \in \types \}$.
We construct
$\mathcal{M}^{\blue{\leftarrow}} = (\mathcal{D}^{\blue{\leftarrow}}, \mathcal{I}^{\blue{\leftarrow}})$ over $\SigmaEq$ as follows:
\begin{description}
  \item{} $\mathcal{D}^{\blue{\leftarrow}} := \mathcal{D}$
  \item{} $\mathcal{I}^{\blue{\leftarrow}} := c_\tau \mapsto \begin{cases}
                                                  \mathcal{I}(\red{c}) \in \mathcal{D}_\tau & \text{ if $\red{c}$ is a parameter from $\SigmaOne$} \\
                                                  \mathsf{q}^\tau \in \mathcal{D}_{\tau} & \text{ if $\tau \equiv o\nu\nu$ for some type $\nu \in \types$ and  $c \equiv \blue{=}^\nu$ } \\
                                                        \end{cases}$
\end{description}
It is immediate that $\mathcal{M}^{\blue{\leftarrow}}$ is a $\SigmaEq$-model. In particular, 
$\mathsf{q}^\tau$ with $\mathsf{q}^\tau(a,b) \equiv \top$ if and only if $a \equiv b$ for every $a,b \in \mathcal{D}_\tau$
is guaranteed to exist by the assumed property $\mathfrak{q}$.

Still, we need to verify that $\mathcal{M}^{\blue{\leftarrow}} \models \H$ indeed holds.
Let $s \in \H$. By definition $\mathcal{M}^{\blue{\leftarrow}} \models s$ if and only if
$\|s\|^{\mathcal{M}^{\blue{\leftarrow}},g} \equiv \top$ for every $g$. By assumption we know that
$\mathcal{M} \models \red{s^\prime}$ and hence 
$\|\red{s^\prime} \|^{\mathcal{M},g} \equiv \top$, for $s^\prime \equiv s[\dot{=}]$.
A simple induction over the structure of $s \in \Lambda(\SigmaEq)$ gives us that
$\|s\|^{\mathcal{M}^{\blue{\leftarrow}},g} \equiv \|\red{s^\prime} \|^{\mathcal{M},g}$
for every variable assignment $g$, where $s^\prime \equiv s[\dot{=}]$.
Hence it follows that $\mathcal{M}^{\blue{\leftarrow}} \models s$, and thus
$\mathcal{M}^{\blue{\leftarrow}} \models \H$.
\end{proof}

Finally, we can apply the above lemma to achieve a model existence theorem for $\SigmaEq$-Hintikka sets.

\begin{theorem}[Bridged Model Existence]
Let $\H \in \mathfrak{H}$ be a $\SigmaEq$-Hintikka set. Then there exists a $\SigmaEq$-Henkin model $\mathcal{M}$
such that $\mathcal{M} \models \H$.
\end{theorem}
\begin{proof}
Let $\H \in \mathfrak{H}$ be a $\SigmaEq$-Hintikka set.
By Theorem~\ref{thm:reduction} there exists a $\SigmaOne$-Hintikka set
$\H^\prime \in \mathfrak{Hint}_{\beta\mathfrak{f}\mathfrak{b}}$ such that $\HHHH \equiv \H^\prime$.
By Corollary~\ref{corollary:hintikkaToHenkin} there exists a $\SigmaOne$-Henkin model $\mathcal{M}$ such that
$\mathcal{M} \models \HHHH$.
By Lemma~\ref{lemma:SigmaEqToSigmaOne} it follows that there exists a $\SigmaEq$-Henkin model $\mathcal{M}^{\blue{\leftarrow}}$
such that $\mathcal{M}^{\blue{\leftarrow}} \models \H$.
\end{proof}
}

\subsection*{Acknowledgements.} We thank an anonymous reviewer for very
valuable feedback to this work (cf.~footnote~\ref{footnote:1}).

%%%%%%%%%%%%%%%%%%%%%%%%%%%%%%%%%%%%%%%%%%%%%%%%%%%%%%%%%%%%%%%%%%%%%%%%%%
%%%%%%%%%%%%%%%%%%%%%%%%%%%%%%%%%%%%%%%%%%%%%%%%%%%%%%%%%%%%%%%%%%%%%%%%%%
%%%%%%%%%%%%%%%%%%%%%%%%%%%%%%%%%%%%%%%%%%%%%%%%%%%%%%%%%%%%%%%%%%%%%%%%%%

\bibliographystyle{plain}
% \bibliography{proofs.bib}

\end{document}